\def\buildcombined{1}

\documentclass[onefignum,onetabnum]{siamart251216}

\newif\ifmainpaper
\newif\ifsupplementcontent
\ifdefined\buildsupplement
  \mainpaperfalse
  \supplementcontenttrue
\else
  \mainpapertrue
  \ifdefined\buildcombined
    \supplementcontenttrue
  \else
    \supplementcontentfalse
  \fi
\fi

\usepackage{algorithm}
\usepackage{algorithmic}
\usepackage{enumerate}
\usepackage{booktabs}
\usepackage{bm}
\usepackage{amssymb}
\usepackage{colortbl}
\usepackage{graphicx}
\usepackage{url}
\usepackage{placeins}
\usepackage{tikz}
\usetikzlibrary{positioning,arrows.meta,shapes.geometric,fit}
\ifmainpaper
\hypersetup{pdftitle={An Adjoint Krylov Approach to Parameter Inference for Affine Families of Linear Evolution Equations}}
\else
\hypersetup{pdftitle={Supplementary Material for An Adjoint Krylov Approach to Parameter Inference for Affine Families of Linear Evolution Equations}}
\fi
\hypersetup{
  pdfauthor={A. Dendukuri, S. Chandrasekaran and L. Petzold},
  colorlinks=true, linkcolor=black, citecolor=black, urlcolor=black
}

\newsiamremark{remark}{Remark}
\newsiamremark{example}{Example}
\newsiamthm{assumption}{Assumption}

\newcommand{\R}{\mathbb{R}}
\newcommand{\N}{\mathbb{N}}
\newcommand{\calL}{\mathcal{L}}
\newcommand{\calX}{\mathcal{X}}

\DeclareMathOperator{\diag}{diag}

\newcommand{\bA}{\bm{A}}

\newcommand{\bJ}{\bm{J}}

\newcommand{\bP}{\bm{P}}
\newcommand{\bH}{\bm{H}}

\newcommand{\bE}{\bm{E}}
\newcommand{\bG}{\bm{G}}

\newcommand{\bV}{\bm{V}}

\newcommand{\bD}{\bm{D}}

\newcommand{\blambda}{\bm{\lambda}}
\newcommand{\bp}{\bm{p}}
\newcommand{\br}{\bm{r}}
\newcommand{\bv}{\bm{v}}
\newcommand{\bw}{\bm{w}}
\newcommand{\bx}{\bm{x}}
\newcommand{\by}{\bm{y}}

\newcommand{\bdelta}{\bm{\delta}}
\newcommand{\btheta}{\bm{\theta}}
\newcommand{\bphi}{\bm{\phi}}
\newcommand{\bzeta}{\bm{\zeta}}
\newcommand{\bnu}{\bm{\nu}}

\newcommand{\supp}{\mathrm{supp}}
\newcommand{\nnz}{\mathrm{nnz}}

\ifmainpaper
\headers{Adjoint Krylov Gradients for Parameter Inference}{A. Dendukuri, S. Chandrasekaran, and L. Petzold}
\title{An Adjoint Krylov Approach to Parameter Inference for Affine Families of Linear Evolution Equations}
\else
\headers{Supplement (Parameter Inference)}{A. Dendukuri, S. Chandrasekaran, and L. Petzold}
\title{Supplementary Material for ``An Adjoint Krylov Approach to Parameter Inference for Affine Families of Linear Evolution Equations''}
\fi
\author{Aditya Dendukuri\thanks{Department of Computer Science, University of California, Santa Barbara
  (\email{aditya\_dendukuri@ucsb.edu}, \email{petzold@engineering.ucsb.edu}).}
\and Shivkumar Chandrasekaran\thanks{Department of Electrical and Computer Engineering, University of California, Santa Barbara
  (\email{shiv@ece.ucsb.edu}).}
\and Linda Petzold\footnotemark[2]}

\begin{document}
\maketitle

\ifmainpaper

\begin{abstract}
This paper introduces an adjoint Krylov method for parameter inference in
large, sparse linear evolution equations whose system matrices depend
affinely on unknown coefficients. We estimate these coefficients from
discrete observations by constructing one forward and one adjoint Krylov
approximation on each observation interval. Quadrature of the resulting
bilinear form gives the full coefficient gradient through sparse
matrix--vector products, without evaluating a separate Fr\'{e}chet derivative
in every operator direction. We separate the gradient error into forward,
adjoint, and quadrature terms and derive residual-based bounds for Markov and
Lindblad generators. We also identify cases in which heuristic stopping
estimates become small before the gradient has converged. With workspace
reuse, we compute library gradients for a 12-qubit Lindblad generator with
dimension $1.7\times10^7$ on a single machine with 16~GB of memory.
Experiments on chemical master, Kuramoto--Sivashinsky, and Lindblad equations
demonstrate accurate gradients and parameter recovery.

\end{abstract}
\begin{keywords}
matrix exponential, Fr\'{e}chet derivative, Krylov methods, adjoint methods,
sparse identification, inverse problems
\end{keywords}
\begin{AMS}
65F60, 65F50, 65L09, 65K10
\end{AMS}

\section{Introduction}
\label{sec:intro}

Many evolution models become large systems of linear ordinary differential
equations after state-space truncation or spatial discretization. We consider
models of the form
\begin{equation}
\label{eq:intro_evolution}
 \dot{\bm u}(t)=\bA(\btheta)\bm u(t),\qquad
 \bA(\btheta)=\bA_0+\sum_j\theta_j\bA_j,
 \qquad
 \bm u(t+h)=e^{h\bA(\btheta)}\bm u(t),
\end{equation}
where the matrices $\bA_j$ are known and the coefficients $\theta_j$ are to
be estimated from observations at discrete times. Depending on the
application, $\bm u$ may be a probability distribution, a discretized field,
or a vectorized density matrix, while the directions $\bA_j$ describe
candidate reactions and propensities, differential operators, or Hamiltonian
and dissipative terms. If the candidate library contains more terms than the
model requires, sparse optimization selects a small active set.

A central bottleneck in parameter inference is the evaluation of gradients
through the exponential action in \eqref{eq:intro_evolution}. One can compute
a separate Fr\'{e}chet derivative in every library direction, but this
repeats the same propagation for each coefficient. Forming the derivative
with respect to all entries of $\bA$ avoids this repetition only by
introducing a dense matrix
that is too large to store. For scalar losses, this difficulty can be avoided
with the classical adjoint formula, which expresses every coefficient
derivative in terms of the same forward and adjoint solutions, so that only
the contraction against $\bA_j$ changes across the library.

In this paper we compute the forward and adjoint paths in separate Krylov
spaces and integrate the resulting bilinear form by quadrature. The two
spaces are constructed once for an observation interval and are then reused
for every library direction. At each quadrature node, we contract
the forward and adjoint vectors against $\bA_j$, requiring one sparse
matrix--vector product and one inner product per direction. The adjoint path
also gives the derivative with respect to the state entering the interval,
while differentiating the endpoint with respect to $h$ gives the interval
derivative. We use these derivatives when the exponential action appears
inside a nonlinear time step or follows a parameter map.

The forward Arnoldi approximation, the adjoint Arnoldi approximation, and the
quadrature rule introduce separate errors. We derive a gradient-error
decomposition that retains all three contributions and use the Arnoldi
residuals to bound the first two. For Markov and Lindblad generators,
contractivity yields computable bounds in the natural paired norms. Keeping
the forward and adjoint errors separate also reveals why two commonly used
refinement estimates can become small too soon. A multiplicative estimate
vanishes when either Arnoldi process terminates, even if the other process
remains unresolved, while a difference between successive approximations can
be misleading during slow convergence.

We test the method on three parameter-inference problems. The first recovers
a sparse chemical master equation from distribution snapshots separated by
long observation intervals. The second differentiates a
Kuramoto--Sivashinsky time step in which a nonlinear map precedes the
exponential action. The third estimates Hamiltonian and dissipative terms in
a Lindblad equation with complex arithmetic and large vectorized density
states. Together, these examples test the calculation across stiff,
nonnormal, nonlinear, and complex-valued models.

\subsection{Contributions}
The principal contributions are as follows.
\begin{itemize}
\item We give an adjoint Krylov algorithm for the coefficient, input-state,
and interval derivatives of an affine exponential action. The algorithm
shares one forward Krylov space and one adjoint Krylov space across the
complete operator library.
\item We give a one-workspace quadrature schedule that stores one Krylov basis
and the forward path at the quadrature nodes. A projected-core calculation
can use comparable memory by rebuilding one of its two Krylov bases.
\item We separate the gradient error into forward projection, adjoint
projection, quadrature, and cross terms. Arnoldi residuals bound the two
projection errors, and contractivity gives computable bounds for Markov and
Lindblad generators.
\item We identify one-sided termination and slow convergence as two cases in
which refinement estimates can report a small value before the gradient is
resolved, and we compare these estimates with independent reference
gradients.
\item We use the adjoint Krylov algorithm for sparse parameter recovery in
chemical master, Kuramoto--Sivashinsky, and Lindblad equations.
\end{itemize}

\subsection{Related work}
Arnoldi approximations of matrix-exponential actions are standard tools for
large sparse evolution equations
\cite{moler2003nineteen,higham2008functions,almohy2011action,saad2003iterative,
hochbruck1997krylov,sidje1998expokit}. Fr\'{e}chet derivatives can be evaluated
through a block exponential~\cite{almohy2009frechet}, or approximated by block
and low-rank Krylov methods
\cite{kandolf2017block,kressner2019bivariate,kandolf2021lowrank,
kressner2026krylov}. Kressner~\cite{kressner2019bivariate} introduced the
two-sided Krylov projection used here, and Kandolf et
al.~\cite{kandolf2021lowrank} analyzed the resulting low-rank approximation
and proposed practical stopping estimates. For fixed forward and adjoint
bases, the projected method evaluates the reduced Fr\'{e}chet derivative
directly, whereas our calculation applies quadrature to its path
representation. We use the projected form as a reference calculation and
study the stopping estimates in
Section~\ref{sec:stopping_failure}.

Continuous-adjoint methods compute the gradient of a scalar objective with
one forward and one backward solve~\cite{cao2003adjoint,serban2005cvodes}.
For the autonomous problem in \eqref{eq:intro_evolution}, both solutions can
be represented in Krylov spaces and evaluated at the quadrature nodes from
small projected matrices. Matfree instead differentiates the finite Arnoldi
recurrence~\cite{kraemer2024matfree}. We compare the two approaches at matched
gradient accuracy in Section~\ref{sec:matfree}.

Sparse equation-discovery methods fit time derivatives, increments, or weak
forms to a candidate library
\cite{brunton2016discovering,hoffmann2019reactive,rudy2017data,
schaeffer2017integral,messenger2021weakode,messenger2021weak,
blumensath2009iterative}. We instead fit the finite-time evolution generated
by the library. In the Brusselator comparison of
Section~\ref{sec:weak_comparator}, the observations and candidate terms are
held fixed so that only the fitting method changes.

Data-driven operator methods infer an unrestricted discrete-time or
infinitesimal evolution operator
\cite{williams2015data,schmid2022dynamic,klus2020datadriven,
nejati2021generator,kostic2024generator}, whereas our coefficients multiply a
given mechanistic library. Other inference methods for stochastic reaction
networks use path likelihoods, approximate Bayesian computation, or moment
closures
\cite{golightly2005bayesian,boys2008bayesian,toni2009abc,
neuert2013systematic,singh2011approximate,schnoerr2017approximation}. For
Lindblad dynamics, we use libraries of Hamiltonian commutators and dissipative
channels~\cite{gorini1976completely,lindblad1976generators}, while related
identification methods consider steady-state and restricted observations
\cite{bairey2020steady,gebhart2023learning}.

\subsection{Outline}
Section~\ref{sec:setup} states the affine inverse problem and finite-time
loss. Section~\ref{sec:gradient} derives the gradient calculation and its
cost. Section~\ref{sec:gradient_error} separates the quadrature and projection
errors. Section~\ref{sec:accuracy} studies numerical accuracy, stopping, and
computational cost. Section~\ref{sec:stls} presents the three inverse
problems. The supplementary material contains implementation details and
additional experiments.

\section{Parameter Inference for Evolution Operators}
\label{sec:setup}
We begin with a generator assembled from known library directions and then
define the finite-time loss whose gradient is required by the inference
procedure.

\subsection{Operator library}
\label{sec:affine}

Let $\bm u(t)\in\mathbb C^n$ satisfy
\begin{equation}
\label{eq:affine}
\frac{d\bm u}{dt}=\bA(\btheta)\bm u,
\qquad
\bA(\btheta)=\bA_0+
\sum_{j=1}^{M_{\rm lib}}\theta_j\bA_j.
\end{equation}
The fixed matrix $\bA_0$ and the $M_{\rm lib}$ library directions $\bA_j$
are known, while the real coefficients $\theta_j$ are to be estimated. The
state and generator may be real or complex, while $\btheta$ is subject only
to the sign or feasibility constraints imposed by the model. Differentiating
$\bA(\btheta)$ does not
require sparse coefficients, since sparsity enters only through the outer
optimization used
to select active directions from an overcomplete library.

We index observation intervals by $k$ and quadrature nodes by $q$. The
forward and adjoint Krylov depths are $m_f$ and $m_a$, and we use $m_K$ when
the depths are equal. The symbol $*$ denotes conjugate transpose and reduces
to the ordinary transpose in the real case.

\subsection{Finite-time loss}
Consider an observation interval of length $h$ with initial state $\bv$. Its
final state is
\begin{equation}
\label{eq:layer}
\bm y=e^{h\bA(\btheta)}\bv.
\end{equation}
For a real-valued loss of a complex state, we define gradients using the real
part of the Hermitian product. If $C$ is a linear observation map and $\bm z$
is the observed value, the least-squares loss
\begin{equation*}
 \ell(\bm y)=\frac12\|C\bm y-\bm z\|_2^2
 \qquad\text{has}\qquad
 \nabla_{\bm y}\ell=C^*(C\bm y-\bm z).
\end{equation*}
We start the adjoint calculation from $\nabla_{\bm y}\ell$. For another
differentiable loss, we compute its gradient with respect to $\bm y$ and use
that vector instead.

For full-state probability distributions $\hat\bp_k$ observed at equally
spaced times, we use the one-step prediction loss
\begin{equation}
\label{eq:loss}
\calL(\btheta)=\frac12\sum_{k=0}^{T-1}
\|e^{\bA(\btheta)\Delta t}\hat\bp_k-\hat\bp_{k+1}\|_2^2.
\end{equation}
We can use this calculation for nonuniform observation times, partial
observations, and other differentiable losses. For each observation interval,
we compute its length and the gradient of the loss with respect to the
propagated state, using the appropriate observation map. Each observation
interval has its own forward and adjoint bases, which are shared by all
library directions on that interval. If the
coefficients are outputs of another model, the chain rule maps the coefficient
gradient back to the parameters of that model. The neural context-to-rate
example in the supplementary material illustrates this composition.

\section{Adjoint Krylov Gradient Computation}
\label{sec:gradient}
The coefficient gradient is obtained by pairing the parameter variation of
the forward equation with an adjoint solution. We first derive this identity
and then evaluate it using separate forward and adjoint Krylov spaces.

\subsection{Forward--adjoint formula}
The exponential action in~\eqref{eq:layer} is the endpoint of the linear
initial-value problem
\begin{equation}
\label{eq:forward_path_ode}
 \bm f'(s)=h\bA(\btheta)\bm f(s),\qquad
 \bm f(0)=\bv,\qquad \bm y=\bm f(1).
\end{equation}
When we differentiate this ODE with respect to a coefficient, we obtain a
forced linear ODE. Its solution follows from variation of constants, which
propagates the forcing applied at time $s$ over the remaining interval $1-s$
and integrates these contributions over $0\leq s\leq1$. Pairing this solution
with the gradient of the loss introduces the forward and adjoint paths
\[
 \bm f(s)=e^{sh\bA}\bv,
 \qquad
 \bm\lambda(s)=e^{(1-s)h\bA^*}\nabla_{\bm y}\ell,
\]
and gives the standard continuous-adjoint sensitivity formula
\cite{cao2003adjoint}.

\begin{proposition}[Adjoint gradient formula]
\label{prop:adjoint_gradient}
\label{thm:shared_grad}
For the action~\eqref{eq:layer}, the derivative of the loss with respect to
coefficient $\theta_j$ is
\begin{equation}
\label{eq:general_vjp}
\frac{\partial\ell}{\partial\theta_j}=h\operatorname{Re}\int_0^1
\bm\lambda(s)^*\bA_j\bm f(s)\,ds.
\end{equation}
\end{proposition}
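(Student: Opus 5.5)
We prove the formula by differentiating the forward path ODE~\eqref{eq:forward_path_ode} with respect to $\theta_j$, solving the resulting forced equation by variation of constants, and pairing the result with $\nabla_{\bm y}\ell$. By the chain rule and the convention for real losses of complex states, we have $\partial\ell/\partial\theta_j=\operatorname{Re}\,(\nabla_{\bm y}\ell)^*\,\partial\bm y/\partial\theta_j$. So it suffices to show
\[
\frac{\partial\bm y}{\partial\theta_j}=h\int_0^1 e^{(1-s)h\bA}\bA_j e^{sh\bA}\bv\,ds .
\]
Since $\bA(\btheta)$ is affine in $\btheta$, $\partial\bA/\partial\theta_j=\bA_j$. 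The map $\btheta\mapsto e^{sh\bA(\btheta)}\bv$ is smooth, being a composition of a power series with an affine map. Hence the sensitivity $\bm g(s)=\partial\bm f(s)/\partial\theta_j$ exists, and we may interchange derivatives in $s$ and $\theta_j$. Differentiating~\eqref{eq:forward_path_ode} gives
\[
\bm g'(s)=h\bA\bm g(s)+h\bA_j\bm f(s),\qquad \bm g(0)=\bm 0 .
\]
The initial condition vanishes because $\bv$ does not depend on $\btheta$.

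Variation of constants gives $\bm g(1)=h\int_0^1 e^{(1-s)h\bA}\bA_j\bm f(s)\,ds$. We verify this by differentiating $e^{-sh\bA}\bm g(s)$, which yields $e^{-sh\bA}h\bA_j\bm f(s)$; integrating from $0$ to $1$ and multiplying by $e^{h\bA}$ recovers the formula. Pairing with $\nabla_{\bm y}\ell$, we move the exponential to the left using $(\nabla_{\bm y}\ell)^*e^{(1-s)h\bA}=\bigl(e^{(1-s)h\bA^*}\nabla_{\bm y}\ell\bigr)^*=\blambda(s)^*$. This identity holds because $(e^{\bm M})^*=e^{\bm M^*}$, which follows termwise from the power series. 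The result is
\[
\frac{\partial\ell}{\partial\theta_j}=h\operatorname{Re}\int_0^1\blambda(s)^*\bA_j\bm f(s)\,ds,
\]
which is~\eqref{eq:general_vjp}. Taking the real part is legitimate because $\theta_j$ is real and the integral is linear over $\R$.

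The main point requiring care is justifying the differentiation under the ODE and the chain-rule convention for complex states. For the first, the alternative is to avoid the ODE and use the standard Fréchet-derivative integral representation $L_{\exp}(\bm X,\bm E)=\int_0^1 e^{(1-s)\bm X}\bm E e^{s\bm X}ds$ with $\bm X=h\bA$ and $\bm E=h\bA_j$. That representation can itself be checked by the same variation-of-constants argument applied to $e^{t(\bm X+\varepsilon\bm E)}$. For the second, we define the gradient so that $d\ell=\operatorname{Re}\langle\nabla_{\bm y}\ell,d\bm y\rangle$, which makes the first step an identity.
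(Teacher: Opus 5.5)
Your proposal is correct and follows the paper's proof essentially step for step: differentiate the forward path ODE to get the tangent equation with zero initial data, solve it by variation of constants at $s=1$, apply the chain rule with the real part, and move the exponential onto the adjoint side via $(\nabla_{\bm y}\ell)^*e^{(1-s)h\bA}=(e^{(1-s)h\bA^*}\nabla_{\bm y}\ell)^*$. Your extra justifications are details the paper leaves implicit; these are the smoothness of $\btheta\mapsto e^{sh\bA(\btheta)}\bv$, which justifies interchanging the $s$- and $\theta_j$-derivatives, and the explicit variation-of-constants check via $e^{-sh\bA}\bm g(s)$.
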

\begin{proof}
Since $\partial\bA/\partial\theta_j=\bA_j$, differentiating
Eq.~\eqref{eq:forward_path_ode} gives
\begin{equation}
\label{eq:tangent_ode}
 \frac{d}{ds}\frac{\partial\bm f(s)}{\partial\theta_j}
 =h\bA\frac{\partial\bm f(s)}{\partial\theta_j}
 +h\bA_j\bm f(s),
 \qquad \frac{\partial\bm f(0)}{\partial\theta_j}=0.
\end{equation}
Applying variation of constants at $s=1$ gives
\begin{equation}
\label{eq:action_directional_derivative}
 \frac{\partial\bm y}{\partial\theta_j}
 =h\int_0^1e^{(1-s)h\bA}\bA_j\bm f(s)\,ds.
\end{equation}
Applying the chain rule to $\ell(\bm y)$ and taking the real part in the
complex case gives
\begin{align}
 \frac{\partial\ell}{\partial\theta_j}
 &=\operatorname{Re}\left(
 (\nabla_{\bm y}\ell)^{*}\frac{\partial\bm y}{\partial\theta_j}
 \right) \notag\\
 &=h\operatorname{Re}\int_0^1
 (\nabla_{\bm y}\ell)^{*}e^{(1-s)h\bA}\bA_j
 e^{sh\bA}\bv\,ds \notag\\
 &=h\operatorname{Re}\int_0^1
 \left(e^{(1-s)h\bA^*}\nabla_{\bm y}\ell\right)^*
 \bA_j\left(e^{sh\bA}\bv\right)\,ds \notag\\
 &=h\operatorname{Re}\int_0^1
 \bm\lambda(s)^*\bA_j\bm f(s)\,ds.
\label{eq:shared_gradient_derivation}
\end{align}
The last equality uses
$(\nabla_{\bm y}\ell)^*e^{(1-s)h\bA}
=(e^{(1-s)h\bA^*}\nabla_{\bm y}\ell)^*$.
\end{proof}

The formula separates propagation from the library direction. We compute
$\bm f$ and $\bm\lambda$ once, while $\bA_j$ enters only through the bilinear
product $\bm\lambda(s)^*\bA_j\bm f(s)$.

The coefficient derivative can also be viewed as a Fr\'{e}chet derivative,
which measures the first-order change of a matrix function in a direction
$\bE$. Formally, $L_f(\bA,\bE)$ is defined by
$f(\bA+\varepsilon\bE)=f(\bA)+\varepsilon L_f(\bA,\bE)+o(\varepsilon)$.
For the exponential, it satisfies~\cite[Eq.~(10.15)]{higham2008functions}
\begin{equation}
\label{eq:frechet_exp}
 L_{\exp}(h\bA,h\bE)
 =h\int_0^1e^{(1-s)h\bA}\bE e^{sh\bA}\,ds.
\end{equation}
Evaluating this integral separately for every $\bA_j$ repeats the two
exponential propagations. Pairing the derivative with the loss gradient
replaces these repeated evaluations by the shared forward and adjoint paths.

For interval $k$ of the snapshot loss~\eqref{eq:loss}, the prediction error is
\[
 \br_k=e^{\bA\Delta t}\hat\bp_k-\hat\bp_{k+1}.
\]
The forward path begins at $\hat\bp_k$, while the adjoint path begins at
$\br_k$ at the right endpoint of the interval. Thus,
\[
 \mathbf f_k(s)=e^{s\bA\Delta t}\hat\bp_k,
 \qquad
 \blambda_k(s)=e^{(1-s)\bA^T\Delta t}\br_k.
\]
For one transition, $\ell_k=\tfrac12\|\br_k\|_2^2$, with differential
\begin{align}
 d\ell_k
 &=\br_k^T d\br_k
 =\br_k^T d\left(e^{\bA\Delta t}\hat\bp_k\right).
\label{eq:snapshot_loss_differential}
\end{align}
The differential shows that the adjoint path for interval $k$ starts from
$\br_k$ at its right endpoint. Substitution into
Proposition~\ref{thm:shared_grad}, followed by summation over the observation
intervals, gives
\begin{equation}
\label{eq:shared_grad}
\frac{\partial\calL}{\partial\theta_j}
=\Delta t\sum_{k=0}^{T-1}\int_0^1
\blambda_k(s)^T\bA_j\mathbf f_k(s)\,ds.
\end{equation}
For another loss, $\br_k$ is replaced by the gradient of that interval's loss
with respect to its predicted state. Figure~\ref{fig:shared_arnoldi}
summarizes the two stages, and Algorithm~\ref{alg:shared_arnoldi} gives the
calculation for one exponential action. The forward basis produces the
prediction, whose loss gradient starts the adjoint Arnoldi process. If the
input state or this loss gradient is zero, we take the corresponding path to
be zero and do not normalize its starting vector.

\subsection{Krylov approximation and quadrature}

The forward and adjoint paths require exponential actions of $\bA$ and
$\bA^*$. Arnoldi projection replaces each large exponential by the
exponential of a small upper-Hessenberg matrix. For an operator $B$ and a
starting vector $v$, the associated Krylov space is
\[
 \mathcal K_m(B,v)=\operatorname{span}\{v,Bv,\ldots,B^{m-1}v\}
\].
Suppose that $v=\beta v_1$, where $\beta=\|v\|_2$. After $m$ Arnoldi
steps, the columns of $V_m=[v_1,\ldots,v_m]$ form an orthonormal basis for
this space, and the projected matrix $H_m=V_m^*BV_m$ is upper Hessenberg
\cite{hochbruck1997krylov,sidje1998expokit}. If $\bdelta_k$ denotes the
$k$th coordinate vector, the Arnoldi relation and the resulting exponential
approximation are
\begin{equation}
\label{eq:arnoldi_recurrence}
B V_m=V_m H_m+h_{m+1,m}v_{m+1}\bdelta_m^*,
\qquad
e^{\tau B}v\approx \beta V_m e^{\tau H_m}\bdelta_1.
\end{equation}
The second relation follows by approximating the solution of
$u'(\tau)=Bu(\tau)$, $u(0)=v$, in the form
$u_m(\tau)=V_mz_m(\tau)$. Requiring the ODE residual to be orthogonal to
$\mathcal K_m(B,v)$ gives
\begin{align}
 V_m^*\bigl(V_m z_m'(\tau)-BV_mz_m(\tau)\bigr)&=0,\notag\\
 z_m'(\tau)&=H_mz_m(\tau),
 \qquad z_m(0)=\beta\bdelta_1.
\label{eq:projected_evolution}
\end{align}
Thus $z_m(\tau)=\beta e^{\tau H_m}\bdelta_1$, and the large exponential
action is computed from an $m\times m$ exponential followed by a
multiplication by $V_m$. The Arnoldi relation also shows that the defect in
the original ODE lies in the direction $v_{m+1}$. Its norm is
\begin{equation}
\label{eq:arnoldi_residual}
\rho_m(\tau)=\beta |h_{m+1,m}|\,
|\bdelta_m^*e^{\tau H_m}\bdelta_1|.
\end{equation}
For a strongly nonnormal matrix, a small residual does not by itself imply a
small error because the remaining evolution may amplify the defect.
Section~\ref{sec:gradient_error} accounts for this amplification in the error
bounds. We also compare successive Krylov depths as a numerical diagnostic.

We construct the forward and adjoint spaces independently, with depths $m_f$
and $m_a$. When the depths are equal, we denote their common value by $m_K$.
For snapshot interval $k$, let
\[
(\bV_k^f,\bH_k^f,\beta_k^f)
\leftarrow
\mathrm{arnoldi}(\bA,\hat{\bp}_k;m_f),
\qquad
(\bV_k^a,\bH_k^a,\beta_k^a)
\leftarrow
\mathrm{arnoldi}(\bA^T,\br_k;m_a),
\]
where $\beta_k^f=\|\hat{\bp}_k\|_2$ and $\beta_k^a=\|\br_k\|_2$. Since the first
Arnoldi vector is the normalized starting vector,
$\bV_k^f{}^T\hat{\bp}_k = \beta_k^f\bdelta_1$ and $\bV_k^a{}^T\br_k = \beta_k^a\bdelta_1$.
The projected forward and adjoint paths are therefore
\[
\mathbf{f}_k(s)\approx \beta_k^f \bV_k^f e^{s\Delta t\,\bH_k^f}\bdelta_1,
\qquad
\blambda_k(s)\approx \beta_k^a \bV_k^a e^{(1-s)\Delta t\,\bH_k^a}\bdelta_1.
\]
At the $q$th Gauss--Legendre node, the projected paths are
\[
\mathbf{f}_k^{(q)}=\beta_k^f\bV_k^f e^{s_q\Delta t\,\bH_k^f}\bdelta_1,
\qquad
\blambda_k^{(q)}=\beta_k^a\bV_k^a e^{(1-s_q)\Delta t\,\bH_k^a}\bdelta_1.
\]
Using the corresponding weights $w_q$ in~\eqref{eq:shared_grad} gives
\begin{equation}
\label{eq:quadrature}
\frac{\partial\calL}{\partial\theta_j}
\approx
\Delta t
\sum_{k=0}^{T-1}\sum_{q=1}^{n_q}
w_q\operatorname{Re}\!\left[
(\blambda_k^{(q)})^* \bA_j \mathbf{f}_k^{(q)}
\right].
\end{equation}
For fixed $k$ and $q$, the two path vectors are independent of the library
index $j$. We compute them once and then evaluate the contractions
\begin{equation}
\label{eq:direction_contraction}
 c_{k,q}^{(j)}
 = (\blambda_k^{(q)})^*\bA_j\mathbf f_k^{(q)}
\end{equation}
for all library directions. When $\bA_j$ is sparse, each contraction requires
one sparse matrix--vector product and one inner product. Enlarging the library
therefore adds contractions but does not require new Krylov spaces or
exponential actions. If $\bA_j$ commutes with $\bA$, then
$\blambda(s)^*\bA_j\mathbf f(s)$ is independent of $s$, so a one-node rule
introduces no quadrature error.

Figure~\ref{fig:shared_arnoldi} separates the calculation into two steps. We
first compute the forward and adjoint paths for each observation interval. We
then contract these paths with every library direction.

\begin{figure}[t]
\centering
\resizebox{\linewidth}{!}{%
\begin{tikzpicture}[
  >=stealth, semithick, font=\footnotesize,
  top/.style={draw, rounded corners=3pt, fill=white,
              minimum width=4.9cm, minimum height=1.6cm,
              text width=4.7cm, align=center, inner sep=5pt},
  bot/.style={draw, rounded corners=3pt, fill=white,
              minimum width=11.0cm, minimum height=1.0cm,
              text width=10.8cm, align=center, inner sep=5pt},
  hdr/.style={font=\footnotesize\bfseries},
  ann/.style={font=\scriptsize, text=black!60, align=center}
]
\node[hdr, text width=12cm, align=center] (h1) at (5.5,2.1)
  {Step 1: compute the forward and adjoint paths\quad
   {\normalfont\itshape once per observation interval}};
\node[top] (fwd) at (2.4,0.55) {
  \textbf{Forward path}\\[3pt]
  $(\bV_k^f,\,\bH_k^f)
     \leftarrow \mathrm{arnoldi}(\bA,\,\hat\bp_k;\,m_f)$\\[2pt]
  $\mathbf{f}_k^{(q)}
     =\beta_k^f\bV_k^f e^{s_q\Delta t\,\bH_k^f}\bdelta_1$
};
\node[top] (adj) at (8.6,0.55) {
  \textbf{Adjoint path}\\[3pt]
  $(\bV_k^a,\,\bH_k^a)
     \leftarrow \mathrm{arnoldi}(\bA^*,\,\br_k;\,m_a)$\\[2pt]
  $\blambda_k^{(q)}
     =\beta_k^a\bV_k^a e^{(1-s_q)\Delta t\,\bH_k^a}\bdelta_1$
};
\node[ann] at (5.5,-0.55)
  {\textit{the two paths depend on interval $k$, not on the library direction $j$}};
\node[bot] (dir) at (5.5,-2.0) {
  $\dfrac{\partial\calL}{\partial\theta_j}
   \approx\Delta t\sum_{k=0}^{T-1}\sum_{q=1}^{n_q}
   w_q(\blambda_k^{(q)})^T\bA_j\mathbf f_k^{(q)}$
  \enskip for all $j$
};
\node[hdr, below=0.15cm of dir, text width=12cm, align=center]
  {Step 2: contract the paths with the library directions\quad
   {\normalfont\itshape all $M_{\rm lib}$ directions,\quad $O(\nnz(\bA_j)+n)$ per direction per $(k,q)$}};
\draw[->] (fwd.south) to[out=270,in=130] ([xshift=-0.5cm]dir.north);
\draw[->] (adj.south) to[out=270,in=50] ([xshift=0.5cm]dir.north);
\end{tikzpicture}}
\caption{The forward and adjoint paths are computed once for each observation
interval and then reused for every library direction.}
\label{fig:shared_arnoldi}
\end{figure}

\begin{algorithm}[t]
\caption{Adjoint Krylov gradient for one exponential action}
\label{alg:shared_arnoldi}
\begin{algorithmic}[1]
\REQUIRE Generator $\bA$, directions $\{\bA_j\}$, state $\bv$, interval $h$, differentiable loss $\ell(\bm y)$, depths $(m_f,m_a)$, Gauss--Legendre nodes and weights $\{s_q,w_q\}$ on $[0,1]$
\ENSURE Prediction $\bm y$ and derivatives $\nabla_{\btheta}\ell$, $\nabla_{\bv}\ell$, and $\partial\ell/\partial h$
\STATE $(V_f,H_f,\beta_f)\leftarrow\mathrm{arnoldi}(\bA,\bv;m_f)$
\STATE $\bm y\leftarrow\beta_f V_f e^{hH_f}\bdelta_1$ and compute the loss gradient $\bm g_y\leftarrow\nabla_{\bm y}\ell(\bm y)$
\STATE $(V_a,H_a,\beta_a)\leftarrow\mathrm{arnoldi}(\bA^*,\bm g_y;m_a)$
\STATE $\nabla_{\bv}\ell\leftarrow\beta_a V_a e^{hH_a}\bdelta_1$
\FOR{$q=1,\ldots,n_q$}
\STATE $\bm f_q\leftarrow\beta_f V_f e^{s_q hH_f}\bdelta_1$,
$\bm\lambda_q\leftarrow\beta_a V_a e^{(1-s_q)hH_a}\bdelta_1$
\ENDFOR
\FOR{$j=1,\ldots,M_{\rm lib}$}
\STATE $\partial\ell/\partial\theta_j\leftarrow h\operatorname{Re}\sum_qw_q\bm\lambda_q^*\bA_j\bm f_q$
\ENDFOR
\STATE $\partial\ell/\partial h\leftarrow\operatorname{Re}(\bm g_y^*\bA\bm y)$
\end{algorithmic}
\end{algorithm}

The coefficient gradient is sufficient when the input state and interval
length are fixed. When the exponential action is part of a larger model,
either quantity may also depend on unknown parameters. We then need
derivatives with respect to $\bm v$ and $h$. For
$\bm y=e^{h\bA(\btheta)}\bm v$, a perturbation of the input state gives
\begin{align}
 d\bm y&=e^{h\bA}d\bm v,\notag\\
 d\ell
 &=\operatorname{Re}\bigl((\nabla_{\bm y}\ell)^*e^{h\bA}d\bm v\bigr)
 =\operatorname{Re}\bigl((e^{h\bA^*}\nabla_{\bm y}\ell)^*d\bm v\bigr).
\end{align}
It follows that
\begin{equation}
\label{eq:composable_pullback}
\nabla_{\bm v}\ell=e^{h\bA^*}\nabla_{\bm y}\ell.
\end{equation}
Differentiating with respect to the interval length gives
\begin{align}
 \frac{\partial\bm y}{\partial h}
 &=\bA e^{h\bA}\bm v=\bA\bm y,\notag\\
 \frac{\partial\ell}{\partial h}
 &=\operatorname{Re}\left((\nabla_{\bm y}\ell)^*
 \frac{\partial\bm y}{\partial h}\right)
 =\operatorname{Re}\bigl((\nabla_{\bm y}\ell)^*\bA\bm y\bigr).
\label{eq:interval_pullback}
\end{align}
The adjoint path at $s=0$ gives the input-state derivative in
\eqref{eq:composable_pullback}. The interval derivative requires one
application of $\bA$ to the computed output $\bm y$. Thus, both derivatives
use quantities already computed by Algorithm~\ref{alg:shared_arnoldi}.

The input-state derivative is needed when an exponential action follows a
nonlinear calculation. For example, consider the semilinear equation
\[
 \dot{\bm u}=\bA(\btheta)\bm u+\bm N(\bm u),
\]
and the first-order integrating-factor step
\begin{equation}
\label{eq:semilinear_step}
\bm u_{k+1}=e^{h\bA(\btheta)}
\bigl(\bm u_k+h\bm N(\bm u_k)\bigr).
\end{equation}
Here the input to the exponential is
$\bm v_k=\bm u_k+h\bm N(\bm u_k)$. Equation
\eqref{eq:composable_pullback} gives the derivative of the loss with respect
to $\bm v_k$. We then apply the chain rule to obtain derivatives with respect
to $\bm u_k$ and any unknown coefficients in $\bm N$. The first-order formula
also introduces a time-discretization error in addition to the Krylov and
quadrature errors. Section~\ref{sec:ks_exp} measures the effect of the
input-state derivative in the Kuramoto--Sivashinsky experiment.

\subsection{Cost and storage}
The work has three parts. We construct the forward and adjoint Krylov spaces,
compute the two paths at the quadrature nodes, and contract the paths with the
library directions. Suppose that Algorithm~\ref{alg:shared_arnoldi} is applied
$N_{\rm act}$ times with a common Krylov depth $m_K$ and $n_q$ quadrature
nodes.
\begin{proposition}[Arithmetic cost]
\label{prop:complexity}
Assume full Arnoldi orthogonalization, dense arithmetic for the projected
exponentials, and direct sparse computation of every library contraction.
Then the arithmetic cost is
\begin{align*}
&O\bigl(N_{\rm act}(m_K\nnz(\bA)+nm_K^2)\bigr)\\
&\quad+O\bigl(N_{\rm act}n_q(m_K^3+nm_K)\bigr)
+O\!\left(N_{\rm act}n_q\sum_{j=1}^{M_{\rm lib}}(\nnz(\bA_j)+n)\right).
\end{align*}
\end{proposition}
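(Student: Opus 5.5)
We count operations line by line in Algorithm~\ref{alg:shared_arnoldi} for a single application, then multiply by $N_{\rm act}$. The operations fall into the three groups named before the statement: building the Krylov spaces, evaluating the two paths at the quadrature nodes, and contracting with the library. Each group produces one term of the bound. The loss gradient $\bm g_y$ and the interval derivative $\operatorname{Re}(\bm g_y^*\bA\bm y)$ are lower order, since they cost $O(n)$ (for the least-squares loss, assuming a sparse or identity observation map) and $O(\nnz(\bA)+n)$ respectively.

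\textbf{Krylov construction.} We use the Arnoldi recurrence \eqref{eq:arnoldi_recurrence} with full orthogonalization.
\begin{itemize}
\item Step $i$ performs one sparse product with $\bA$ or $\bA^*$, at cost $O(\nnz(\bA))$.
\item It then orthogonalizes against $i$ previous vectors of length $n$, at cost $O(ni)$.
\end{itemize}
Summing over $i\le m_K$ gives $O(m_K\nnz(\bA)+nm_K^2)$ for each of the forward and adjoint processes. The input-state derivative $\nabla_{\bv}\ell=\beta_aV_ae^{hH_a}\bdelta_1$ and the prediction $\bm y$ each need one $m_K\times m_K$ exponential, $O(m_K^3)$, and one basis multiplication, $O(nm_K)$. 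These are absorbed either into the first term or into the $q$-loop term, because $n_q\ge1$.

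\textbf{Path evaluation.} At each node $s_q$ we form $e^{s_qhH_f}$ and $e^{(1-s_q)hH_a}$ densely.
\begin{itemize}
\item Scaling-and-squaring with a Pad\'e approximant, or any dense method with a bounded number of matrix products, costs $O(m_K^3)$ per exponential.
\item Multiplying by $\beta V\bdelta_1$ costs $O(m_K^2)$ for the small product and $O(nm_K)$ to apply the basis.
\end{itemize}
Over the $n_q$ nodes this gives $O(n_q(m_K^3+nm_K))$.

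\textbf{Library contraction.} For each $q$ and $j$, computing $\bm\lambda_q^*\bA_j\bm f_q$ takes one sparse matrix--vector product, $O(\nnz(\bA_j))$, and one inner product, $O(n)$. Summing over $j$ and $q$ gives the last term.

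The only point that needs care is the dense exponential cost. Its scaling-and-squaring step count depends on $\|hH\|$, so it is not uniformly bounded. We would state that the $O(m_K^3)$ per-exponential count treats the number of squarings as a constant, which is the usual convention under the hypothesis of ``dense arithmetic for the projected exponentials''. An alternative is a single Schur or eigendecomposition of $H_f$ and $H_a$, reused across nodes. Everything else is direct counting.
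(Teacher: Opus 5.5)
Your proposal is correct and follows the same three-part count as the paper's proof and its supplementary derivation: $O(m_K\nnz(\bA)+nm_K^2)$ per Arnoldi process from summing the Gram--Schmidt work over the steps, $O(m_K^3+nm_K)$ per node for the two projected exponentials and basis multiplications, and $O(\nnz(\bA_j)+n)$ per direction and node for the contractions. Your extra bookkeeping for $\bm y$, $\nabla_{\bv}\ell$, and $\partial\ell/\partial h$ is sound, as is your remark that the $O(m_K^3)$ exponential cost treats the number of squarings as bounded; the paper leaves both points implicit.
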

\begin{proof}
Each action constructs two $m_K$-step Arnoldi spaces. The sparse products
with $\bA$ cost $O(m_K\nnz(\bA))$, while full orthogonalization costs
$O(nm_K^2)$. At each quadrature node, computing the two small exponentials
and multiplying by the Arnoldi bases costs $O(m_K^3+nm_K)$. The contraction
with direction $j$ costs $O(\nnz(\bA_j)+n)$. Applying these costs to all
actions, nodes, and library directions gives the stated expression.
\end{proof}
The first term is the cost of the two Arnoldi spaces, the second computes the
paths at the quadrature nodes, and the third contracts those paths with the
library. Once $\bA$ has been assembled, only the third term depends explicitly
on the number of library directions. A larger library may also change the
sparsity of $\bA$. In that case, the first term grows through
$\nnz(\bA)$, and assembling $\bA$ has an additional library-dependent cost.
For CME directions, $\nnz(\bA_j)=O(n)$, so the contraction cost becomes
$O(N_{\rm act}n_qM_{\rm lib}n)$. The supplementary material gives the full
derivation.

Full orthogonalization can be noticeable when multiplication by $\bA$ is
inexpensive, even when $m_K\ll n$. The second term also assumes that a dense
projected exponential is computed separately at every node. A Schur
decomposition of each projected matrix replaces these computations by one
$O(m_K^3)$ factorization and lower-order work at each node.

A direct implementation retains both Arnoldi bases and uses
$O(n(m_f+m_a)+m_f^2+m_a^2)$ storage. Quadrature admits a smaller schedule.
Once the forward basis has been constructed, we evaluate and retain the
forward path at the $n_q$ quadrature nodes. We then overwrite the forward
basis with the adjoint basis and evaluate the adjoint path one node at a
time. The resulting storage is
$O(n(\max(m_f,m_a)+n_q)+m_f^2+m_a^2)$. In contrast, the projected-core
contraction couples the two bases and must retain both when each basis is
constructed only once. It can reach the one-workspace memory scale by
rebuilding one basis after forming the reduced core. The supplement compares
these three schedules on a matrix-free Lindblad problem. These estimates do
not include the operator library.

\section{Gradient Error Analysis}
\label{sec:gradient_error}

The gradient approximation has three sources of error. Two come from the
forward and adjoint Krylov spaces, and the third comes from quadrature. We
first separate these errors and then bound the two path errors using the
Arnoldi residuals. A small residual need not imply a small error when $\bA$
is nonnormal because the remaining evolution may amplify the defect. The
bounds include this amplification and the size of each library direction.
They are intended to detect an unresolved path, not to select the smallest
Krylov depth on every stiff problem.

Let
\[
 \bm f(s)=e^{sh\bA}\bv,
 \qquad
 \bm\lambda(s)=e^{(1-s)h\bA^*}\nabla_{\bm y}\ell,
 \qquad 0\leq s\leq1,
\]
and let $\widetilde{\bm f}$ and $\widetilde{\bm\lambda}$ denote their Krylov
approximations. The approximate paths have the exact boundary values at
$s=0$ and $s=1$, respectively. For any path $\bm z$, we write
$\|\bm z\|_\infty=\max_{0\leq s\leq1}\|\bm z(s)\|_2$. The quadrature nodes
and nonnegative weights are denoted by $s_q,w_q$, with $\sum_qw_q=1$.

Before deriving the bound, we record why the two path errors must be treated
separately. Consider a rank-one direction
$\bE=\bm a\bm b^*$ with $\bm a=\alpha\bm v_1$ and
$\bm b=\beta\bm w_1$. Let the two Arnoldi relations be
\begin{align}
 \bA V_m&=V_mP_m+p_{m+1,m}\bm v_{m+1}\bdelta_m^*,\notag\\
 \bA^*W_m&=W_mQ_m+q_{m+1,m}\bm w_{m+1}\bdelta_m^*.
\label{eq:two_sided_arnoldi}
\end{align}
Here $p_{m+1,m}$ and $q_{m+1,m}$ are denoted $g_{m+1,m}$ and
$h_{m+1,m}$, respectively, in \cite{kandolf2021lowrank}.
The depth-$m$ projected Fr\'{e}chet approximation of
\cite[Algorithm~2]{kandolf2021lowrank} is
\begin{equation}
 L_m=V_mX_mW_m^*,\qquad
 X_m=\alpha\beta\int_0^1 e^{(1-s)P_m}
 \bdelta_1\bdelta_1^*e^{sQ_m^*}\,ds.
\label{eq:two_sided_frechet}
\end{equation}
The residual-product estimate in~\cite[Eq.~(59)]{kandolf2021lowrank} has the form
\begin{equation}
 \eta_m^{\times}
 =|p_{m+1,m}q_{m+1,m}\,\xi_m|,
 \label{eq:multiplicative_estimator}
\end{equation}
where $\xi_m$ is an entry of a reduced block matrix function. This estimate is
used as a heuristic in that work.

\begin{proposition}[One-sided termination]
\label{prop:one_sided_termination}
Suppose that $\bA^*\bm b=\overline{\lambda}\bm b$ and let $V_m$ be the
Arnoldi basis constructed from $(\bA,\bm a)$, with
$P_m=V_m^*\bA V_m$. Define
\[
 \psi_\lambda(z)=
 \begin{cases}
  (e^z-e^\lambda)/(z-\lambda),&z\ne\lambda,\\
  e^\lambda,&z=\lambda.
 \end{cases}
\]
Then the Arnoldi process for $(\bA^*,\bm b)$ terminates after one step and
the residual-product estimate in~\eqref{eq:multiplicative_estimator} is zero.
The exact derivative and the depth-$(m,1)$ projected approximation satisfy
\begin{align}
 L_{\exp}(\bA,\bm a\bm b^*)&=\psi_\lambda(\bA)\bm a\bm b^*,\notag\\
 L_{m,1}&=V_m\psi_\lambda(P_m)V_m^*\bm a\bm b^*.
\label{eq:one_sided_reduction}
\end{align}
Consequently,
\begin{equation}
 \left\|L_{\exp}(\bA,\bm a\bm b^*)-L_{m,1}\right\|_F
 =\|\bm b\|_2
 \left\|\psi_\lambda(\bA)\bm a
 -V_m\psi_\lambda(P_m)V_m^*\bm a\right\|_2.
\label{eq:one_sided_error}
\end{equation}
Thus one-sided termination makes the residual-product estimate zero while
leaving the ordinary Krylov approximation error on the other side unchanged.
The same conclusion holds with the two Arnoldi processes interchanged.
\end{proposition}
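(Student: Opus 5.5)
Since $\bA^*\bm b=\overline{\lambda}\bm b$, the Krylov space $\mathcal K_m(\bA^*,\bm b)$ is one-dimensional. With $\bm w_1=\bm b/\|\bm b\|_2$, the first Arnoldi step gives $Q_1=\bm w_1^*\bA^*\bm w_1=\overline{\lambda}$ and a residual vector $\bA^*\bm w_1-\overline{\lambda}\bm w_1=0$. Hence $q_{2,1}=0$, the process terminates after one step, and $\eta^{\times}$ in~\eqref{eq:multiplicative_estimator} vanishes because it carries the factor $q_{m+1,m}$, whatever $p_{m+1,m}$ or $\xi_m$ may be.

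For the exact derivative I will use the integral representation~\eqref{eq:frechet_exp} with $h=1$:
\[
L_{\exp}(\bA,\bm a\bm b^*)=\int_0^1e^{(1-s)\bA}\bm a\bm b^*e^{s\bA}\,ds .
\]
Taking adjoints of $e^{s\bA^*}\bm b=e^{s\overline{\lambda}}\bm b$ gives $\bm b^*e^{s\bA}=e^{s\lambda}\bm b^*$, so
\[
L_{\exp}(\bA,\bm a\bm b^*)=\Bigl(\int_0^1e^{(1-s)\bA}e^{s\lambda}\,ds\Bigr)\bm a\bm b^* .
\]
Consider the scalar function $g(z)=\int_0^1e^{(1-s)z+s\lambda}\,ds$. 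For $z\neq\lambda$ it equals $(e^z-e^\lambda)/(z-\lambda)$, and for $z=\lambda$ it equals $e^\lambda$, so $g=\psi_\lambda$. Moreover $g$ is entire, being an integral of entire functions uniform in $s$. The matrix integral is therefore $\psi_\lambda(\bA)$; I would justify this by expanding $e^{(1-s)\bA}$ in its power series and integrating termwise, which is the main technical point. This gives the first identity in~\eqref{eq:one_sided_reduction}.

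For the projected approximation I will specialise~\eqref{eq:two_sided_frechet} to depths $(m,1)$. This gives $W_1=\bm w_1$ and $Q_1=\overline{\lambda}$, hence $e^{sQ_1^*}=e^{s\lambda}$ as a scalar. Writing $\bm a=\alpha\bm v_1$ and $\bm b=\beta\bm w_1$,
\[
L_{m,1}=\alpha\beta\,V_m\Bigl(\int_0^1e^{(1-s)P_m}e^{s\lambda}\,ds\Bigr)\bdelta_1\bm w_1^*
=V_m\psi_\lambda(P_m)\bdelta_1\,\alpha\beta\,\bm w_1^* ,
\]
by the same scalar-function argument applied to $P_m$. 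Substituting $\alpha\bdelta_1=V_m^*\bm a$ and $\beta\bm w_1^*=\bm b^*$ yields the second identity in~\eqref{eq:one_sided_reduction}. One bookkeeping point needs checking: whether the scaling by $h$ and the convention $\bA^*$ versus $Q_m^*$ in~\eqref{eq:two_sided_frechet} match the statement exactly. I expect this to be the only fiddly step.

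Subtracting the two identities gives the rank-one matrix $(\psi_\lambda(\bA)\bm a-V_m\psi_\lambda(P_m)V_m^*\bm a)\bm b^*$. Its Frobenius norm is the product of the two vector $2$-norms, since $\|\bm x\bm b^*\|_F=\|\bm x\|_2\|\bm b\|_2$, which is~\eqref{eq:one_sided_error}. The interchanged case follows by the symmetric argument: take $\bm a$ an eigenvector of $\bA$, use $e^{(1-s)\bA}\bm a=e^{(1-s)\mu}\bm a$, and apply the same reduction to $\bm b^*$. Alternatively, apply the proven case to $\bA^*$ together with the relation $L_{\exp}(\bA,\bE)^*=L_{\exp}(\bA^*,\bE^*)$.
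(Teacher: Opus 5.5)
Your proposal is correct and follows the paper's proof. Both use the eigenvector relation $\bm b^*e^{s\bA}=e^{s\lambda}\bm b^*$ in the integral form of the Fr\'{e}chet derivative, the same calculation with $P_m$ and the scalar $Q_1=\overline{\lambda}$ for the projected approximation, the rank-one Frobenius identity, and $q_{2,1}=0$ for the vanishing estimate. The bookkeeping point you flag resolves immediately: \eqref{eq:two_sided_frechet} carries no factor $h$, you already have $e^{sQ_1^*}=e^{s\lambda}$, and $\beta=\|\bm b\|_2$ is real, so $\beta\bm w_1^*=\bm b^*$.
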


\begin{proof}
The eigenvector relation gives
$\bm b^*e^{s\bA}=e^{s\lambda}\bm b^*$. Substitution into the integral
definition of the Fr\'{e}chet derivative gives the first identity in
\eqref{eq:one_sided_reduction}. Applying the same calculation to the projected
matrix $P_m$ gives the second identity. Equation~\eqref{eq:one_sided_error}
then follows from the Frobenius norm of a rank-one matrix. Since the Arnoldi
residual coefficient on the $\bm b$ side is zero, so is
$\eta_m^{\times}$.
\end{proof}

\begin{theorem}[Gradient error decomposition]
\label{thm:residual_gradient_error}
For each library direction, let
\[
 \widetilde g_j
 =h\operatorname{Re}\sum_qw_q
 \widetilde{\bm\lambda}(s_q)^*\bA_j\widetilde{\bm f}(s_q).
\]
Then
\begin{align}
 \left|\frac{\partial\ell}{\partial\theta_j}-\widetilde g_j\right|
 &\leq |h|\left|\operatorname{Re}\left(
 \int_0^1\bm\lambda(s)^*\bA_j\bm f(s)\,ds
 -\sum_qw_q\bm\lambda(s_q)^*\bA_j\bm f(s_q)
 \right)\right|\notag\\
 &\quad+|h|\|\bA_j\|_2\Bigl(
 \|\bm f\|_\infty\|\bm\lambda-\widetilde{\bm\lambda}\|_\infty
 +\|\bm\lambda\|_\infty\|\bm f-\widetilde{\bm f}\|_\infty\notag\\
 &\hspace{43mm}
 +\|\bm\lambda-\widetilde{\bm\lambda}\|_\infty
  \|\bm f-\widetilde{\bm f}\|_\infty\Bigr).
 \label{eq:main_path_gradient_bound}
\end{align}
\end{theorem}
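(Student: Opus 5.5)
The plan is a three-way split: first isolate the quadrature error, then expand the bilinear form in the path errors. By Proposition~\ref{thm:shared_grad},
\[
\frac{\partial\ell}{\partial\theta_j}=h\operatorname{Re}\int_0^1\bm\lambda(s)^*\bA_j\bm f(s)\,ds .
\]
Add and subtract the exact-path quadrature sum $g_j^Q=h\operatorname{Re}\sum_qw_q\bm\lambda(s_q)^*\bA_j\bm f(s_q)$. The triangle inequality then gives
\[
\left|\frac{\partial\ell}{\partial\theta_j}-\widetilde g_j\right|\le\left|\frac{\partial\ell}{\partial\theta_j}-g_j^Q\right|+|g_j^Q-\widetilde g_j| .
\]
The first term is exactly the quadrature term in \eqref{eq:main_path_gradient_bound}. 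Since $h$ is real, it factors out of the real part with modulus $|h|$.

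For the second term, write $\bm e_f=\bm f-\widetilde{\bm f}$ and $\bm e_\lambda=\bm\lambda-\widetilde{\bm\lambda}$. At each node $s_q$, use the algebraic identity
\[
\bm\lambda^*\bA_j\bm f-\widetilde{\bm\lambda}^*\bA_j\widetilde{\bm f}
=\bm e_\lambda^*\bA_j\bm f+\bm\lambda^*\bA_j\bm e_f-\bm e_\lambda^*\bA_j\bm e_f .
\]
To verify it, substitute $\widetilde{\bm\lambda}=\bm\lambda-\bm e_\lambda$ and $\widetilde{\bm f}=\bm f-\bm e_f$ and expand; the sign of the cross term is irrelevant after taking absolute values. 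Then bound $|\operatorname{Re}z|\le|z|$ and apply Cauchy--Schwarz with the operator norm, $|\bm x^*\bA_j\bm y|\le\|\bA_j\|_2\|\bm x\|_2\|\bm y\|_2$. Each node value is dominated by its sup over $[0,1]$, since $s_q\in[0,1]$ for Gauss--Legendre nodes on $[0,1]$. This yields the three products $\|\bm f\|_\infty\|\bm e_\lambda\|_\infty$, $\|\bm\lambda\|_\infty\|\bm e_f\|_\infty$ and $\|\bm e_\lambda\|_\infty\|\bm e_f\|_\infty$.

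Finally, sum over $q$. Because the weights are nonnegative, $\sum_q w_q|c_q|\le(\sum_qw_q)\max_q|c_q|$, and $\sum_qw_q=1$ removes the weights. Multiplying by $|h|$ produces the second group of terms in \eqref{eq:main_path_gradient_bound}.

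There is no real obstacle. The only points needing care are using the nonnegativity of the weights and the normalization $\sum_qw_q=1$, and choosing the expansion so that the cross term has the form $\|\bm e_\lambda\|\|\bm e_f\|$ rather than involving the approximate paths $\widetilde{\bm f}$ or $\widetilde{\bm\lambda}$.
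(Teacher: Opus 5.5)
Your proof is correct and follows the paper's argument essentially step for step. You add and subtract the exact-path quadrature sum, expand the nodewise difference into $\bm e_\lambda^*\bA_j\bm f+\bm\lambda^*\bA_j\bm e_f-\bm e_\lambda^*\bA_j\bm e_f$, bound each term with $\|\bA_j\|_2$, and sum using nonnegative weights of unit total mass.
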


The first term is the quadrature error for the exact paths. The remaining
terms contain the forward path error, the adjoint path error, and their
product. Each is scaled by the norm of the library direction.

\begin{proof}
Let $e_f=\bm f-\widetilde{\bm f}$ and
$e_\lambda=\bm\lambda-\widetilde{\bm\lambda}$. After adding and subtracting
the quadrature sum for the exact paths, the difference at each node is
\[
 \bm\lambda^*\bA_j\bm f-
 \widetilde{\bm\lambda}^*\bA_j\widetilde{\bm f}
 =e_\lambda^*\bA_j\bm f+\bm\lambda^*\bA_je_f
 -e_\lambda^*\bA_je_f.
\]
Applying $|\bm x^*\bA_j\bm y|\leq
\|\bm x\|_2\|\bA_j\|_2\|\bm y\|_2$ to these three terms gives a bound that is
independent of the node. Summing with nonnegative weights of unit total mass
proves~\eqref{eq:main_path_gradient_bound}.
\end{proof}

To bound the two path errors in
Theorem~\ref{thm:residual_gradient_error}, define the Arnoldi residuals
\begin{equation}
 \bm r_f(s)=\widetilde{\bm f}'(s)-h\bA\widetilde{\bm f}(s),
 \qquad
 \bm r_\lambda(s)=\widetilde{\bm\lambda}'(s)
 +h\bA^*\widetilde{\bm\lambda}(s).
 \label{eq:main_path_residuals}
\end{equation}
Subtracting the exact and approximate evolution equations and applying
variation of constants gives
\begin{align*}
 \bm f(s)-\widetilde{\bm f}(s)
 &=-\int_0^s e^{(s-u)h\bA}\bm r_f(u)\,du,\\
 \bm\lambda(s)-\widetilde{\bm\lambda}(s)
 &=\int_s^1 e^{(u-s)h\bA^*}\bm r_\lambda(u)\,du.
\end{align*}
The finite-interval amplification factor is
\[
 \kappa_A=\max_{0\leq s\leq1}\int_0^s
 \|e^{(s-u)h\bA}\|_2\,du.
\]
The adjoint has the same factor in the $2$-norm after a change of variables.
Consequently,
\[
 \|\bm f-\widetilde{\bm f}\|_\infty
 \leq\kappa_A\|\bm r_f\|_\infty,
 \qquad
 \|\bm\lambda-\widetilde{\bm\lambda}\|_\infty
 \leq\kappa_A\|\bm r_\lambda\|_\infty.
\]
The exact path norms in Theorem~\ref{thm:residual_gradient_error} satisfy
\[
 \|\bm f\|_\infty\leq\|\widetilde{\bm f}\|_\infty
 +\kappa_A\|\bm r_f\|_\infty,
 \qquad
 \|\bm\lambda\|_\infty\leq\|\widetilde{\bm\lambda}\|_\infty
 +\kappa_A\|\bm r_\lambda\|_\infty.
\]
Substitution into~\eqref{eq:main_path_gradient_bound} bounds the two path
errors using the forward and adjoint residuals, $\kappa_A$, the computed path
norms, and $\|\bA_j\|_2$. The quadrature error remains separate.

For the $m_f$-step Arnoldi path
$\widetilde{\bm f}(s)=\beta_fV_{m_f}e^{shH^f_{m_f}}\bdelta_1$, we use the Arnoldi
relation to obtain the computable residual
\begin{equation}
 \|\bm r_f(s)\|_2
 =|h|\beta_f|h^f_{m_f+1,m_f}|
 |\bdelta_{m_f}^*e^{shH^f_{m_f}}\bdelta_1|,
 \label{eq:main_arnoldi_residual}
\end{equation}
The adjoint residual has the analogous form. This is the ODE residual used in
residual-based error control for Krylov exponential methods
\cite{botchev2013residual}. The two Arnoldi factorizations therefore compute
the residual quantities needed in~\eqref{eq:main_path_gradient_bound}. The
remaining factor measures the growth of $e^{t\bA}$ over the interval.

\begin{corollary}[Contractive CTMC bound]
\label{cor:ctmc_residual_bound}
Consider a CTMC generator or sub-generator $\bA$ in the column convention,
with nonnegative off-diagonal entries and nonpositive column sums, and let
$h\geq0$.
Let
\[
 \delta_f(s)=\int_0^s\|\bm r_f(u)\|_1\,du,\qquad
 \delta_a(s)=\int_s^1\|\bm r_\lambda(u)\|_\infty\,du.
\]
For each library direction, let
\[
 g_j^{\mathrm{proj}}=h\operatorname{Re}\int_0^1
 \widetilde{\bm\lambda}(s)^*\bA_j\widetilde{\bm f}(s)\,ds.
\]
Then
\begin{align}
 \left|\frac{\partial\ell}{\partial\theta_j}
 -g_j^{\mathrm{proj}}\right|
 &\leq h\int_0^1\Bigl(
 \delta_a(s)\|\bA_j\widetilde{\bm f}(s)\|_1
 +\delta_f(s)\|\bA_j^T\widetilde{\bm\lambda}(s)\|_\infty\notag\\
 &\hspace{35mm}
 +\|\bA_j\|_1\delta_f(s)\delta_a(s)\Bigr)\,ds.
 \label{eq:ctmc_residual_bound}
\end{align}
\end{corollary}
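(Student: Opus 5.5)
The plan is to redo the decomposition of Theorem~\ref{thm:residual_gradient_error} for the exact integral rather than the quadrature sum, and to pair the norms as $\ell^1$ against $\ell^\infty$ instead of $\ell^2$ against $\ell^2$. Since $g_j^{\mathrm{proj}}$ is an exact integral, there is no quadrature term. We write $e_f=\bm f-\widetilde{\bm f}$ and $e_\lambda=\bm\lambda-\widetilde{\bm\lambda}$. In the real CTMC setting $*$ is the transpose, and expanding around the approximate paths gives, pointwise in $s$,
\[
 \bm\lambda^*\bA_j\bm f-\widetilde{\bm\lambda}^*\bA_j\widetilde{\bm f}
 =e_\lambda^*\bA_j\widetilde{\bm f}+\widetilde{\bm\lambda}^*\bA_je_f
 +e_\lambda^*\bA_je_f .
\]
This is the same identity as in the theorem's proof, expanded around $\widetilde{\bm f},\widetilde{\bm\lambda}$ instead of $\bm f,\bm\lambda$, which is why the bound contains computed quantities only.

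We bound the three terms with H\"older's inequality:
\[
 |e_\lambda^*\bA_j\widetilde{\bm f}|\le\|e_\lambda\|_\infty\|\bA_j\widetilde{\bm f}\|_1,\qquad
 |\widetilde{\bm\lambda}^*\bA_je_f|=|(\bA_j^T\widetilde{\bm\lambda})^Te_f|\le\|\bA_j^T\widetilde{\bm\lambda}\|_\infty\|e_f\|_1,
\]
\[
 |e_\lambda^*\bA_je_f|\le\|e_\lambda\|_\infty\|\bA_j\|_1\|e_f\|_1 .
\]
Integrating over $s$ and multiplying by $h\ge0$ then reduces the corollary to the two pointwise path bounds $\|e_f(s)\|_1\le\delta_f(s)$ and $\|e_\lambda(s)\|_\infty\le\delta_a(s)$.

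These follow from the variation-of-constants representations stated before the corollary,
\[
 e_f(s)=-\int_0^s e^{(s-u)h\bA}\bm r_f(u)\,du,\qquad
 e_\lambda(s)=\int_s^1 e^{(u-s)h\bA^*}\bm r_\lambda(u)\,du,
\]
together with two contraction estimates: $\|e^{t\bA}\|_1\le1$ and $\|e^{t\bA^T}\|_\infty\le1$ for $t\ge0$. The second is the first one again, because $\|\bm M^T\|_\infty=\|\bm M\|_1$. Taking norms inside the integrals yields $\delta_f$ and $\delta_a$ exactly.

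The one real step is the $\ell^1$ contractivity of $e^{t\bA}$ for a sub-generator in column convention. We would show it as follows.
\begin{itemize}
\item Choose $c\ge\max_i|a_{ii}|$. Then $\bA+c\bm I$ is entrywise nonnegative, so $e^{t\bA}=e^{-ct}e^{t(\bA+c\bm I)}$ is entrywise nonnegative.
\item Let $\bm 1$ be the all-ones vector. Nonpositive column sums give $\bm 1^T\bA\le0$ entrywise.
\item Hence $\frac{d}{dt}\bm 1^Te^{t\bA}=\bm 1^T\bA\,e^{t\bA}\le0$ entrywise, using the nonnegativity from the first step, so the column sums of $e^{t\bA}$ are at most $1$.
\item A nonnegative matrix whose column sums are at most $1$ has $\|\cdot\|_1\le1$.
\end{itemize}
This requires $h\ge0$, so that every exponent $(s-u)h$ and $(u-s)h$ appearing above is nonnegative.

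The remaining care is in the norm bookkeeping. The adjoint error must be measured in $\infty$ and the forward error in $1$, consistently with how $\bm r_\lambda$ and $\bm r_f$ enter $\delta_a$ and $\delta_f$.
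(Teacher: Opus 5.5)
Your proposal is correct and follows essentially the same route as the paper. You add and subtract the bilinear form of the projected paths, pair the three resulting terms as $\ell^1$ against $\ell^\infty$, and obtain $\|e_f(s)\|_1\le\delta_f(s)$ and $\|e_\lambda(s)\|_\infty\le\delta_a(s)$ from variation of constants and the contractivity of the semigroup and its adjoint. The only difference is that you prove the $\ell^1$ contractivity of $e^{t\bA}$ (via $e^{t\bA}=e^{-ct}e^{t(\bA+c\bm I)}$ and nonincreasing column sums) and deduce the $\infty$-norm bound from $\|\bm M^T\|_\infty=\|\bm M\|_1$, whereas the paper simply asserts both facts.
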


\begin{proof}
A CTMC or sub-Markov semigroup is contractive in the $1$-norm, and its
adjoint is contractive in the $\infty$-norm. Variation of constants therefore
gives
$\|\bm f(s)-\widetilde{\bm f}(s)\|_1\leq\delta_f(s)$ and
$\|\bm\lambda(s)-\widetilde{\bm\lambda}(s)\|_\infty\leq\delta_a(s)$.
Adding and subtracting the bilinear form of the two projected paths leaves
three terms bounded by
\[
 \delta_a\|\bA_j\widetilde{\bm f}\|_1,\qquad
 \delta_f\|\bA_j^T\widetilde{\bm\lambda}\|_\infty,\qquad
 \|\bA_j\|_1\delta_f\delta_a.
\]
Integration proves~\eqref{eq:ctmc_residual_bound}.
\end{proof}

The Arnoldi factorizations also give the residual norms used in
$\delta_f$ and $\delta_a$. Denote the forward projected matrix and residual
coefficient by $H^f_{m_f}$ and $h^f_{m_f+1,m_f}$, and denote the adjoint
quantities by $H^a_{m_a}$ and $h^a_{m_a+1,m_a}$. If the starting-vector norms
are $\beta_f$ and $\beta_a$, then
\begin{align*}
 \|\bm r_f(u)\|_1
 &=h\beta_f|h^f_{m_f+1,m_f}|\,\|\bv_{m_f+1}\|_1
 \bigl|\bdelta_{m_f}^*e^{uhH^f_{m_f}}\bdelta_1\bigr|,\\
 \|\bm r_\lambda(u)\|_\infty
 &=h\beta_a|h^a_{m_a+1,m_a}|\,\|\bw_{m_a+1}\|_\infty
 \bigl|\bdelta_{m_a}^*e^{(1-u)hH^a_{m_a}}\bdelta_1\bigr|.
\end{align*}
If quadrature is used for the projected integral, its remainder must be added
to the right-hand side of~\eqref{eq:ctmc_residual_bound}. In the numerical
experiments, we compare two quadrature orders instead. This comparison checks
quadrature convergence but is not an error bound.

The same argument applies to a Lindblad semigroup when the dissipation rates
are nonnegative at the evaluation point. The semigroup is contractive in
trace norm, while its adjoint is unital and contractive in operator norm. We
reshape each Arnoldi residual vector as a matrix before computing these norms
\cite{russo1966unitary}.

The three error sources may be refined independently. Increasing the
quadrature order reduces the first term in
\eqref{eq:main_path_gradient_bound}. Increasing the forward or adjoint depth
reduces the corresponding residual. The factors $\kappa_A$ and
$\|\bA_j\|_2$ account for amplification and for the scale of the library
direction. Since $\kappa_A$ can be difficult to estimate for a large
nonnormal operator, Section~\ref{sec:accuracy} also compares successive
Krylov depths and quadrature orders.

\section{Accuracy and Computational Benchmarks}
\label{sec:accuracy}

We first determine the forward depth, adjoint depth, and quadrature order
needed for an accurate gradient. We then compare runtimes using only
configurations that meet the same tolerance. The two Krylov spaces and the
quadrature rule are refined separately. Their errors may partly cancel, so
the total error need not decrease whenever one parameter is increased.

\subsection{Reference and error measure}

For problems small enough to admit a dense reference, we use the block
identity
\cite[Thm.~3.6]{higham2008functions}
\begin{equation}
\label{eq:block_frechet}
\exp\!\begin{pmatrix}\bA\Delta t & \bE\Delta t\\ \bm{0} & \bA\Delta t\end{pmatrix}
=\begin{pmatrix} e^{\bA\Delta t} & L_{\exp}(\bA\Delta t,\bE\Delta t)\\ \bm{0} & e^{\bA\Delta t}\end{pmatrix},
\end{equation}
for each direction $\bE$. We compute the block exponential without Krylov
projection or quadrature and report the relative $2$-norm error of the full
parameter gradient. Finite differences do not resolve the smallest errors in
these tests. With a step of $10^{-6}$, their resolution is approximately
$10^{-10}$.

\subsection{Structural failures of heuristic stopping estimates}
\label{sec:stopping_failure}

Two common diagnostics are not error bounds. The residual-product estimate
vanishes if either Arnoldi process terminates, even when the other path is
unresolved. Figure~\ref{fig:one_sided_failure} shows a 200-state
reflecting-walk example in which this estimate is zero while the
true Fr\'{e}chet error ranges from $20.66$ to $0.137$. A difference between
successive projected approximations can also be small during slow convergence.
For the stiff 1681-state Brusselator example, this difference is
$5.40\times10^{-9}$ at depth 16, while the normalized library-gradient error
is $6.40\times10^{-7}$. The residual bound in
Corollary~\ref{cor:ctmc_residual_bound} remains valid in both cases, although
it can be conservative. The supplement gives both constructions and the
complete plot. We use the heuristic quantities only as diagnostics and check
the forward, adjoint, and quadrature errors separately.

\begin{figure}[tbp]
\centering
\includegraphics[width=0.60\linewidth,trim=0 0 252 0,clip]
  {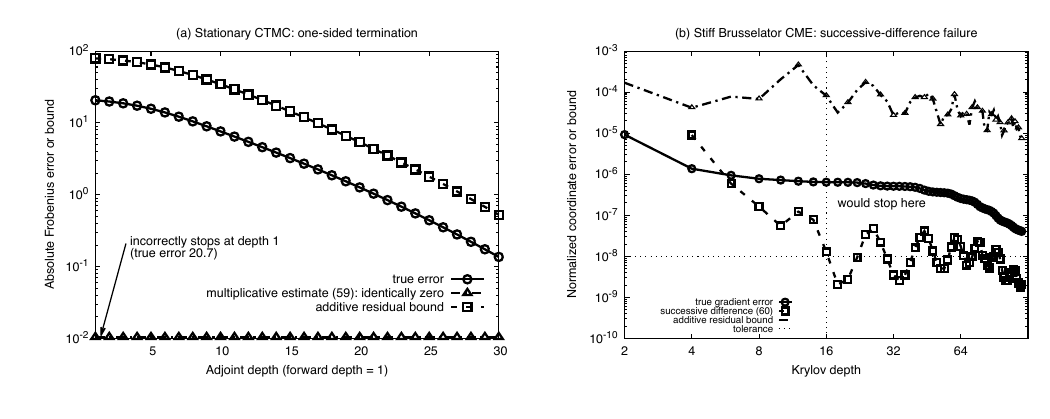}
\caption{One-sided Arnoldi termination makes the residual-product estimate
identically zero although the Fr\'{e}chet error is nonzero. The additive
residual bound remains valid. The supplement gives the stiff-CME test and the
full experimental details.}
\label{fig:one_sided_failure}
\end{figure}

\subsection{Krylov and quadrature errors}

We vary $m_K$ and $n_q$ separately for a stiff nonnormal Fokker--Planck
generator with $n=80$ and $M_{\rm lib}=10$. The supplement gives the full
error table. At $n_q=8$, the error is nearly independent of $m_K$. It changes from
$9.1\times10^{-6}$ at $m_K=20$ to $9.7\times10^{-6}$ at $m_K=60$. The
quadrature remainder therefore dominates. In contrast, at $n_q=24$, increasing
$m_K$ from 20 to 60 reduces the error from $1.0\times10^{-5}$ to
$1.8\times10^{-15}$. The slight increase between $n_q=8$ and $n_q=24$ at
$m_K=20$ reflects the nonmonotonicity of the combined projection and
quadrature errors. The error continues to decrease when both parameters are
refined. Since the integrand is analytic in $s$, Gauss--Legendre quadrature
converges rapidly until the Krylov error or roundoff dominates. Increasing
$n_q$ computes additional small exponential actions but requires no new
Arnoldi steps.

For the complex Lindblad generator, the gradient error reaches $10^{-16}$ at
$n_q=12$ for every tested step size. This experiment also verifies the use of
the conjugate transpose for complex operators.

\subsection{Choosing \texorpdfstring{$m_K$ and $n_q$}{mK and nq} in practice}

Given an action tolerance $\varepsilon_{\rm act}$, we increase $m_f$ until both
$\rho_{m_f}(h)\leq\varepsilon_{\rm act}\|\bv\|_2$ and the forward action agrees
with the next tested depth. Holding that forward action fixed, we increase
$m_a$ until the corresponding checks pass for the loss gradient at the right
endpoint. We then double $n_q$ until the coefficient gradient changes by no
more than the target gradient tolerance. The comparison uses the largest
relative change over all reported directions, not only the largest gradient
component.

We vary one parameter at a time to identify the limiting error. If the error
is unchanged when $m_f$ and $m_a$ increase, quadrature is limiting the
accuracy. If it is unchanged when $n_q$ increases, one of the Krylov spaces is
limiting the accuracy. Near a stationary input, the forward action may be
accurate while the coefficient gradient is still unresolved. We must
therefore check the adjoint path and the coefficient gradient as well. These
comparisons are refinement diagnostics rather than error bounds. For a
general nonnormal operator, the bound in Section~\ref{sec:gradient_error}
accounts for amplification by $e^{t\bA}$, which an Arnoldi residual alone does
not measure.

The required quadrature order depends on the problem. Eight nodes are
insufficient for the stiff Fokker--Planck example, while the Lindblad sweep
reaches $10^{-16}$ with 12 nodes. We therefore use a short refinement sweep
rather than a universal value of $n_q$.

\begin{figure}[tbp]
\centering\includegraphics[width=\linewidth]{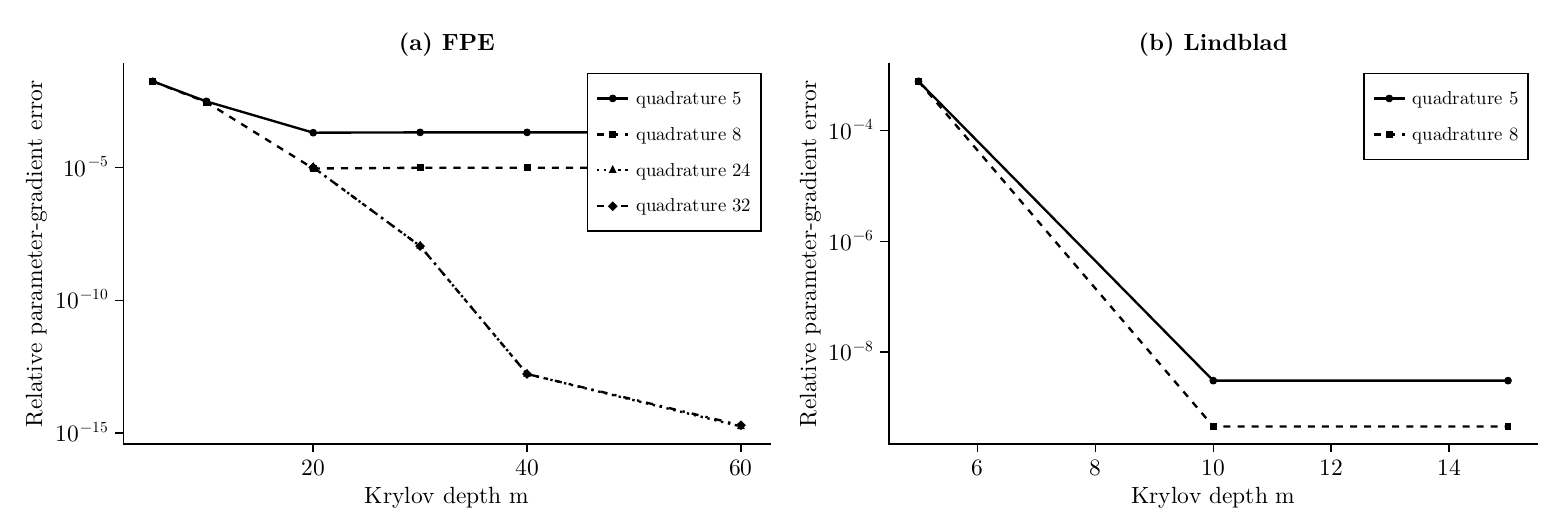}
\caption{Measured Krylov-depth sweeps for Fokker--Planck ($h=0.1$) and
Lindblad ($h=1$) operators, against dense block-Fr\'{e}chet references.
Line styles and markers identify quadrature orders. Refining the basis alone
cannot remove a quadrature floor. We report the higher-order Fokker--Planck
values in the supplement.}
\label{fig:accuracy_bw}
\end{figure}
\paragraph{Accuracy after a coefficient map.}
If another model computes the coefficients $\btheta$, the chain rule
multiplies the coefficient gradient by the Jacobian of that model. This
Jacobian may also amplify an error in the gradient. A tolerance used for
optimization should therefore be checked after the chain rule. The supplement
examines sensitivity-scaled coordinates and the neural rate map.

\subsection{Library reuse at matched accuracy}
\label{sec:library_accuracy}
To isolate the effect of library size, we hold a 320-state birth--death
operator fixed and increase the number of distinct two-entry directions from
16 to 512. For each library, we select the fastest tested configuration whose
forward, parameter-gradient, and input-state-gradient errors all satisfy the
target tolerance. At tolerance $10^{-8}$, separate directional block
exponentials require between 4.37 and 131 ms as the library grows. Reusing the
two Krylov spaces requires between 0.884 and 3.48 ms, giving a $38\times$
speedup at 512 directions. Every selected configuration has an error below
$8.3\times10^{-14}$. The supplement gives the complete sweep and the
fused-contraction results.

\subsection{Comparison with differentiated Arnoldi at matched accuracy}
\label{sec:matfree}

We compare our calculation with reverse differentiation of a matrix-free
exponential routine in Matfree \cite{kraemer2024matfree}. Matfree differentiates
the finite Arnoldi calculation. Our method first writes the derivative of the
exact finite-time solution and then approximates the forward and adjoint
paths. Both calculations converge to the same gradient as their Krylov spaces
are refined. For each state dimension, we report the fastest configuration
that reaches the prescribed accuracy.

Table~\ref{tab:matfree} reports median wall-clock times on an NVIDIA RTX 4070
SUPER. Both methods are compiled with JAX in the same process, which avoids a
cross-language comparison. At tolerance $10^{-6}$, Matfree is faster for
$n\leq320$, while our method is faster from $n=10^{3}$. At tolerances
$10^{-8}$ and $10^{-10}$, the crossover occurs at $n=10^{4}$. The relative
advantage grows with the state dimension and decreases as the tolerance is
tightened. In sparse-library problems, the state space and the number of
candidate directions may both be large, while sampling noise often sets the
required accuracy above machine precision.

\begin{table}[!tbp]
\centering
\caption{Comparison with Matfree at fixed accuracy on an NVIDIA RTX 4070 SUPER.
Entries are median times in milliseconds. Parentheses contain the achieved
maximum relative error over the parameter and input-state gradients for the
fastest tested configuration that meets the column
tolerance. There are 31 evaluations after compilation and warmup. Both methods
are JIT-compiled in the same process. Bold marks the faster method.}
\label{tab:matfree}
{\small
\resizebox{\linewidth}{!}{\begin{tabular}{rrrrrrr}
\toprule
& \multicolumn{2}{c}{tol $10^{-6}$} & \multicolumn{2}{c}{tol $10^{-8}$} & \multicolumn{2}{c}{tol $10^{-10}$}\\
\cmidrule(lr){2-3}\cmidrule(lr){4-5}\cmidrule(lr){6-7}
$n$ & ours & Matfree & ours & Matfree & ours & Matfree \\
\midrule
80      & 4.24 ($3.0\mathrm e{-9}$) & \textbf{2.53} ($2.5\mathrm e{-8}$) & 4.24 ($3.0\mathrm e{-9}$) & \textbf{3.06} ($1.3\mathrm e{-12}$) & 5.43 ($2.1\mathrm e{-13}$) & \textbf{3.06} ($1.3\mathrm e{-12}$) \\
320     & 3.30 ($8.0\mathrm e{-7}$) & \textbf{3.00} ($5.4\mathrm e{-8}$) & 4.57 ($3.6\mathrm e{-10}$) & \textbf{3.65} ($3.4\mathrm e{-12}$) & 5.76 ($2.6\mathrm e{-11}$) & \textbf{3.65} ($3.4\mathrm e{-12}$) \\
1{,}000 & \textbf{3.81} ($3.2\mathrm e{-7}$) & 4.36 ($7.4\mathrm e{-8}$) & 5.86 ($1.0\mathrm e{-10}$) & \textbf{5.34} ($4.6\mathrm e{-12}$) & 5.94 ($1.0\mathrm e{-10}$) & \textbf{5.34} ($4.6\mathrm e{-12}$) \\
3{,}000 & \textbf{5.69} ($5.6\mathrm e{-7}$) & 6.37 ($4.9\mathrm e{-7}$) & 8.51 ($1.7\mathrm e{-10}$) & \textbf{8.36} ($3.0\mathrm e{-11}$) & 11.09 ($2.7\mathrm e{-11}$) & \textbf{8.36} ($3.0\mathrm e{-11}$) \\
10{,}000& \textbf{4.21} ($8.8\mathrm e{-7}$) & 9.99 ($1.8\mathrm e{-10}$) & \textbf{8.30} ($2.6\mathrm e{-10}$) & 9.99 ($1.8\mathrm e{-10}$) & \textbf{10.42} ($4.3\mathrm e{-11}$) & 11.85 ($3.1\mathrm e{-12}$) \\
\bottomrule
\end{tabular}}
}
\end{table}

The timing sweep uses Float64, $M_{\rm lib}=1000$, $h=0.3$, and a local
birth--death generator with $3n-2$ nonzeros. Accuracy is the maximum of
parameter- and input-state-gradient relative errors against an independently
refined exponential reference. Matching parameter gradients alone would not
reproduce this timing comparison. The supplement adds the complete timing
curves, dense automatic differentiation, the two-sided projection, CPU
results, and additional sparse-contraction comparisons.

The forward and adjoint paths may require different depths. A
linear quantity $\bm c^*\bm u$ is invariant when $\bm c^*\bA=0$. If also
$\bm c^*\bA_j=0$, its derivative with respect to coefficient $j$ must vanish.
The exact gradient formula preserves this property for CTMC mass and Lindblad
trace. Differentiating an under-resolved Arnoldi calculation may give a
nonzero derivative unless the invariant is projected explicitly. The two
methods agree after refinement or explicit invariant projection.

The forward path, adjoint path, and quadrature require separate accuracy
checks. We refine them in that order and test both the input-state and
coefficient gradients. The supplement gives under-resolved and nearly
stationary examples in which checking only the forward value is insufficient.

\section{Applications}
\label{sec:stls}

We apply the method to three inverse problems. The Brusselator CME uses an
overcomplete reaction library with rates that vary across experimental
conditions. The Kuramoto--Sivashinsky model places a nonlinear calculation
before the exponential and therefore requires the input-state derivative.
The Lindblad model has a complex generator with real coefficients. The
forward and adjoint Krylov spaces are constructed in the same way for all
three problems. Only the coefficient map and the optimization procedure
change.

\subsection{Sparse optimization}
\label{sec:overview}

We perform thresholding in the scaled coordinates $\bphi=\bD\btheta$. The
$j$th diagonal entry of $\bD$ is the norm of the initial sensitivity for
coefficient $j$. The threshold therefore compares changes in the model output
rather than coefficients that may have different physical units.

Each smooth optimization is restricted to the current support and uses
Algorithm~\ref{alg:shared_arnoldi} for its gradient. After the optimization,
we project negative coefficients to zero and remove coefficients below a
fixed fraction of the largest scaled coefficient. When the support stops
changing, we solve once more without thresholding. The supplement gives the
complete pseudocode. In the Lindblad experiment, every candidate receives the
same initial value and the algorithm is given no support information. The
scaling and thresholding belong to the outer optimization, not to the
gradient calculation.

\subsection{Brusselator CME (ungrouped finite-time identification)}
\label{sec:snapshot_controls}
\label{sec:weak_comparator}
\label{sec:experiments}
For a reaction network with copy-number state $\bx\in\N^d$, reaction $m$ has
stoichiometric change $\bzeta_m$. On a finite state space
$\calX=\{\bx_1,\ldots,\bx_n\}$, the probability vector obeys
$\dot\bp=\bA\bp$. A mass-action candidate consuming
$\bnu=(\nu_1,\ldots,\nu_d)$ molecules uses
\begin{equation}
\label{eq:mass_action_feature}
 \phi_{\bnu}(\bx)=\prod_{i=1}^d\binom{x_i}{\nu_i}.
\end{equation}
For each proposed stoichiometry $\bzeta_m$ and feature $\phi_k$, the fixed
gain--loss direction is defined with
$\chi_m(\bx)=\mathbf 1_{\{\bx+\bzeta_m\in\calX\}}$ as
\begin{equation}
\label{eq:cme_library_direction}
 [\bA_m^{(k)}]_{\by,\bx}
 =\chi_m(\bx)\phi_k(\bx)\bigl(\mathbf 1_{\{\by=\bx+\bzeta_m\}}
                   -\mathbf 1_{\{\by=\bx\}}\bigr),
 \qquad \bx,\by\in\calX.
\end{equation}
A reaction that would leave $\calX$ is suppressed, while every retained gain
has a matching diagonal loss. This boundary convention keeps probability in
the truncated state space rather than allowing it to leak through the
boundary. The complete generator is
\begin{equation}
\label{eq:cme_affine_library}
 \bA(\btheta)=\sum_{m=1}^{M}\sum_{k=1}^{K}
 \theta_{m,k}\bA_m^{(k)},\qquad \theta_{m,k}\geq0.
\end{equation}
Each candidate mechanism defines one sparse matrix, and the support of
$\btheta$ determines the inferred reaction network. The conservative
truncation gives $\mathbf1^T\bA_m^{(k)}=0$ for every direction.

We consider the Brusselator on a $13\times13$ count grid, so $n=169$. The
candidate library contains all 40 pairings of four stoichiometric channels
with bivariate mass-action monomials of total degree at most three. The active
mechanisms are
$\varnothing\to X$, $X\to\varnothing$, $X\to Y$, and
$2X+Y\to3X$ (library directions 1, 13, 23, and 39). We do not use channel
grouping or assume a support size. We fit four experimental conditions, each
with six initial distributions and exact noiseless snapshots separated by the
nonuniform gaps $0.05$, $0.20$, and $0.50$. Four additional conditions and a
time-$1.5$ prediction are reserved for testing.

The nonnegative linear rate model is
$\theta_j(A,B)=W_{j,:}[1,A,B]^T$. Adam first minimizes the finite-time
snapshot loss with an $\ell_1$ penalty over all 40 rows. A fixed coefficient
threshold then selects the support. The selected model is refitted with
unpenalized Adam followed by a limited-memory quasi-Newton (L-BFGS) solve. The
threshold is calibrated on seed 11 and then held fixed for the remaining four
seeds. All five runs recover exactly the four active directions, with no false
positives. The mean coefficient, test-transition, and long-horizon relative
errors are $4.27\times10^{-6}$, $4.54\times10^{-6}$, and
$3.99\times10^{-6}$, respectively.
Figure~\ref{fig:brusselator_headliner}(a) reports the small seed-to-seed
variation. The fit uses only snapshot distributions and no reaction event
counts.

The two weak baselines replace the finite-time transition by an integrated
form of the differential equation. For a smooth test function $\psi$ that
vanishes at the ends of its interval, weighted weak-form SINDy uses
integration by parts to fit
\begin{equation}
 -\int \psi'(t)\bp(t)\,dt
 =\sum_j\theta_j\int\psi(t)\bA_j\bp(t)\,dt.
 \label{eq:weak_cme_comparator}
\end{equation}
We evaluate these integrals from a piecewise-linear interpolant of the
snapshots. The interval rule instead applies the trapezoidal approximation on
every observed interval:
\begin{equation}
 \bp(t_{i+1})-\bp(t_i)
 \approx \Delta t_i\sum_j\theta_j\bA_j
 \frac{\bp(t_i)+\bp(t_{i+1})}{2}.
 \label{eq:interval_weak_comparator}
\end{equation}
Figure~\ref{fig:brusselator_headliner}(b) compares the finite-time Krylov fit
with these two baselines \cite{messenger2021weak}. All three methods use the
same noiseless distributions on the $13\times13$ state grid. They also use the
same training and test conditions and the same 40-direction library. At the
four nonuniform snapshots, the finite-time fit recovers exactly 4/4 directions
in every run, with mean test coefficient
error $4.27\times10^{-6}$. Weak SINDy recovers all four directions but adds two
false positives, with error $0.855$. The interval rule selects one false
positive and gives error $0.548$. For the weak fits, hyperparameters are
selected by comparison with the known coefficients. We repeat the comparison
with 6, 16, and 76 snapshots. The finite-time fit recovers the exact support
throughout this sweep after sampling-specific sparsity tuning. Its coefficient
errors at 4, 6, 16, and 76
snapshots are $1.78\times10^{-6}$, $2.33\times10^{-6}$,
$3.72\times10^{-5}$ and $7.79\times10^{-5}$. Weak SINDy becomes slightly more
accurate at 16 snapshots and is about four times more accurate at 76. The
finite-time fit is substantially more accurate at four and six snapshots,
where it avoids temporal quadrature across the widest observation gaps. The
supplement includes an all-edge 870-direction CTMC, nearly collinear
candidates, and omitted-mechanism experiments.

\begin{figure}[tbp]
\centering\includegraphics[width=\linewidth]{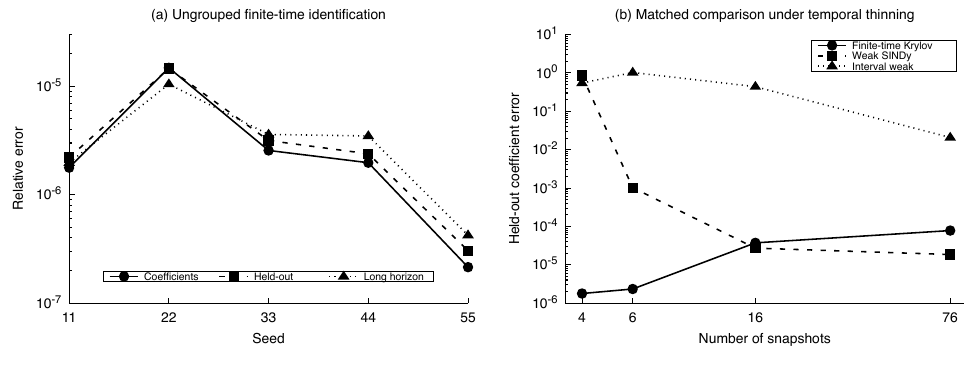}
\caption{Ungrouped snapshot-only Brusselator identification. (a) Coefficient,
test-transition, and time-1.5 prediction errors for five finite-time fits.
Every run selects exactly 4/40 directions. (b) Matched test coefficient
error for the finite-time Krylov fit and two weak-form baselines on the same
distributions and library. Four denotes the nonuniform snapshots used in panel
(a). The remaining points refine the observation grid. The finite-time curve
uses seed 11, 500 discovery epochs, 300 refit epochs and at most three randomly sampled
transitions per update, followed by 100 L-BFGS iterations on the final support.
The weak-form hyperparameters are selected by comparison
with the known coefficients. The finite-time threshold is fixed at $0.2$. Its
$\ell_1$ penalty is $10^{-4}$ except at 76 snapshots, where we reduce it to
$10^{-6}$ to account for the shorter intervals.}
\label{fig:brusselator_headliner}
\end{figure}
\FloatBarrier

\subsection{Kuramoto--Sivashinsky PDE}
\label{sec:ks_exp}
The KS example tests the input-state derivative.
\begin{equation}\label{eq:ks_pde}
 u_t=-a u_{xx}-b u_{xxxx}-\frac{c}{2}\partial_x(u^2),
\end{equation}
on a periodic domain of length 22. On 64 grid points, centered difference
matrices give
\begin{equation}\label{eq:ks_split}
 \bA(a,b)=-aD_{xx}-bD_{xxxx},\qquad
 \bm N_c(\bm u)=-\frac{c}{2}D_x(\bm u\odot\bm u),
 \qquad D_{xxxx}=D_{xx}^2.
\end{equation}
Here $\odot$ denotes componentwise multiplication.
We first take an explicit Euler step for the nonlinear term and then propagate
the result with the exponential of the linear operator:
\begin{equation}\label{eq:ks_expanded_step}
 \bv_k=\bm u_k-\frac{hc}{2}D_x(\bm u_k\odot\bm u_k),
 \qquad
 \bm u_{k+1}=e^{h\bA(a,b)}\bv_k.
\end{equation}
The affine directions $-D_{xx}$ and $-D_{xxxx}$ give the derivatives with
respect to $a$ and $b$ through Proposition~\ref{prop:adjoint_gradient}. The
coefficient $c$ enters through $\bv_k$. We first compute the derivative of the
loss with respect to $\bv_k$ using~\eqref{eq:composable_pullback}, and then
differentiate the nonlinear term. The supplement gives the complete chain
rule.

We take $h=0.0125$ and use Krylov depth 40 with eight quadrature nodes. A
fourth-order Runge--Kutta (RK4) solution with 20 substeps serves as the
reference. We fit ten transitions from each of three fields and test a
20-step prediction on a fourth field. Across five initializations, the mean
coefficient and test-path errors are $1.84\%$ and $0.0908\%$. If the
input-state derivative $\nabla_{\bv_k}\ell$ is omitted, the initial gradient
of $c$ is zero and the two errors increase to $18.6\%$ and $1.04\%$.

We then enlarge the model to 15 directions,
\begin{equation}\label{eq:ks_sparse_library}
 \{-D_x,-D_{xx},D_xD_{xx},-D_{xxxx}\}
 \ \cup\
 \left\{-\frac{1}{q}D_x(\bm u^{\odot q}):q=2,\ldots,12\right\}.
\end{equation}
Only $-D_{xx}$, $-D_{xxxx}$, and the quadratic flux are active. We add Gaussian
snapshot noise at $0$, $0.25\%$, $0.5\%$, and $1\%$ root-mean-square (RMS)
amplitude. Prediction accuracy is the relative $L^2$ error at $T=0.5$ on a
fourth, noise-free path that is not used for fitting. Seed 11 at $0.5\%$
fixes the $10^{-7}$ penalty and $0.08$ threshold for all 20 initializations and
all four levels. We recover the exact support in all 80 fits, with mean
active--inactive separation between $0.910$ and $0.957$.

Noise affects support recovery and coefficient estimation differently. At
$0.5\%$, refitting on the selected support reduces the mean full-vector coefficient error from
$4.22\%$ to $3.21\%$ and the noise-free test-path error from $0.407\%$
to $0.255\%$.
At $1\%$, it increases these errors from $5.98\%$ to $8.79\%$ and from
$0.508\%$ to $0.713\%$, respectively. The support therefore remains stable
even after unpenalized refitting ceases to improve the coefficient estimates.

To measure scaling, we use a length-704 domain with 2048 points and
Fourier-modulated second- and fourth-derivative libraries of size 16, 64, and
256. Reusing the two Krylov spaces takes $4.07$, $7.99$, and $24.1$ ms.
Rebuilding both spaces for every direction takes $46.0$, $188$, and $757$ ms.
The speedup ranges from $11.3$ to $31.5$, and the two calculations agree to
machine precision. The timings are minima of five warmed runs at depth 40
with eight quadrature nodes.

\begin{figure}[H]
\centering
\includegraphics[width=\linewidth]{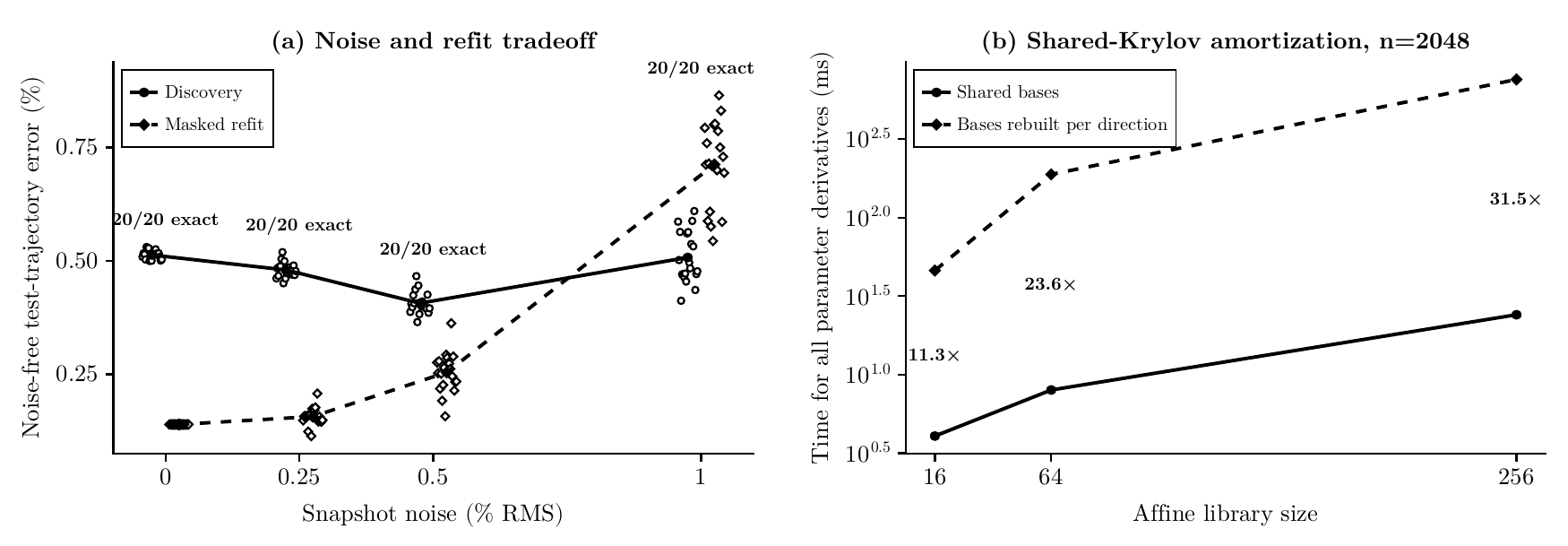}
\caption{Kuramoto--Sivashinsky robustness and computational scaling.
(a) Noise-free test-path errors over 20 initializations at each snapshot-noise
level. We recover the exact support in every initial sparse fit. The solid and dashed
curves show the means before and after refitting on the selected support.
(b) Time required to compute all parameter derivatives on the 2048-state
problem when the two Krylov spaces are reused or rebuilt for every affine
direction. We annotate the curves with the measured speedups.}
\label{fig:ks_identification}
\end{figure}
\FloatBarrier

\subsection{Lindblad equation}
\label{sec:lindblad_exp}

For an open quantum system, let
$H(\bm J)=H_0+\sum_jJ_jH_j$ and let $L_k$ denote candidate jump operators.
The density matrix satisfies
\begin{equation}
\label{eq:lindblad_structured_operator}
\dot\rho=-i[H(\bm J),\rho]
 +\sum_k\gamma_k\left(
 L_k\rho L_k^*-\tfrac12\{L_k^*L_k,\rho\}\right).
\end{equation}
Here $[X,Y]=XY-YX$ is the commutator,
$\{X,Y\}=XY+YX$ is the anticommutator, and $X^*$ denotes the conjugate
transpose. The commutator describes Hamiltonian evolution, while each term in
the sum is a dissipative channel. Let
$\bm{\rho}_{\rm v}=\operatorname{vec}(\rho)$. Using
$\operatorname{vec}(AXB)=(B^T\otimes A)\operatorname{vec}(X)$, we obtain
\begin{equation}
\label{eq:lindblad_kronecker_library}
 \dot{\bm{\rho}}_{\rm v}=\left(\mathcal L_0+
 \sum_jJ_j\mathcal H_j+\sum_k\gamma_k\mathcal D_k\right)\bm{\rho}_{\rm v},
 \qquad
 \mathcal H_j=-i\left(I\otimes H_j-H_j^T\otimes I\right),
\end{equation}
where, with $K_k=L_k^*L_k$,
\begin{equation}
\label{eq:lindblad_dissipative_direction}
 \mathcal D_k=\overline{L_k}\otimes L_k
 -\frac12\left(I\otimes K_k+K_k^T\otimes I\right).
\end{equation}
The Hamiltonian commutators and dissipators are fixed directions on
$\mathbb C^{n^2}$ with real coefficients $(\bm J,\bm\gamma)$. The generator
is complex, and trace replaces probability mass as the conserved linear
quantity. We therefore use the conjugate-transpose adjoint in
\eqref{eq:general_vjp}. The resulting coefficient gradient is real and is used
in the thresholded optimization.

For $q$ qubits, the vectorized density state has dimension $4^q$. We fix a
transverse field as $\bA_0$ and construct an overcomplete library. It contains
$XX+YY$ exchange couplings on all $\binom{q}{2}$ site pairs, together with
dephasing, decay, and bit-flip dissipators on every site. Thus,
$M_{\rm lib}=\binom{q}{2}+3q$. Only nearest-neighbor exchange and $\sigma_z$
dephasing are active in the true generator. We initialize every candidate at
the same nonzero value and infer the $2q-1$ active directions without using
the sparsity pattern.

Table~\ref{tab:lindblad_main} reports five independent rate draws at each
size. The supplement gives the individual runs, partial-observation fits, and
finite-shot experiments.

\begin{table}[tbp]
\centering
\caption{Sparse Lindblad identification over five rate draws per size. Every
run recovers the exact support, with no false positives or false negatives.}
\label{tab:lindblad_main}
\begin{tabular}{rrrrr}
\toprule
Qubits & Liouville dimension & Library & Active & Mean rate error \\
\midrule
2 & 16   & 7  & 3 & $7.5\times10^{-9}$ \\
3 & 64   & 12 & 5 & $5.6\times10^{-9}$ \\
4 & 256  & 18 & 7 & $1.3\times10^{-8}$ \\
5 & 1024 & 25 & 9 & $1.4\times10^{-8}$ \\
6 & 4096 & 33 & 11 & $7.3\times10^{-9}$ \\
\bottomrule
\end{tabular}
\end{table}

The larger Lindblad experiments test the sparse contraction routine rather
than support recovery. The supplement reports this scaling separately for
seven through eleven qubits.

\section{Discussion and Conclusions}
\label{sec:conclusion}
We have developed a method for parameter inference in large, sparse linear
ODEs whose generators depend affinely on unknown coefficients. For each
exponential action, one forward Krylov space and one adjoint Krylov space are
used for the entire coefficient library. Gauss--Legendre quadrature reduces
the coefficient gradient to the contractions
$\bm\lambda^*\bA_j\bm f$. The same computation gives derivatives with respect
to the input state and interval length, which allows the exponential action
to be used inside a larger model.

The error analysis separates quadrature error from errors in the forward and
adjoint paths. Its additive form remains sensitive when only one path is
unresolved, and the contractive bounds for CTMC and Lindblad generators use
residual quantities computed by Arnoldi. The numerical examples show why
this separation matters. A residual-product estimate can vanish after
one-sided termination, while successive differences can become small during
slow convergence. We therefore use the residual bounds as safeguards and
check successive depths and quadrature orders separately.

The experiments use the same gradient calculation for a CME, a semilinear
PDE, and a Lindblad equation. They recover sparse mechanisms from finite-time
data, differentiate through a nonlinear step, and treat complex generators.
At matched accuracy, reusing the two Krylov spaces reduces the cost when the
state space and coefficient library are large.
Quadrature also permits a one-workspace implementation that replaces one
full Krylov basis by the forward path at the quadrature nodes. In the
12-qubit Lindblad test, this schedule fits within 16~GB. A projected-core
calculation reaches the same memory scale only after rebuilding one basis.

The present experiments are synthetic, and the main identification tests use
full-state observations. The method removes repeated work across the library
but does not remove growth in the state dimension, such as the $4^q$ density
state for $q$ qubits. Compressed state representations and adaptive Krylov
solvers are natural directions for larger problems and longer time
intervals.

\paragraph{Code availability}
Source code and experiment scripts are available at
\url{https://github.com/AdityaDendukuri/AdjointKrylovGradients}.

\fi
\ifsupplementcontent
\ifdefined\buildcombined
\clearpage
\begin{center}
{\Large\bfseries Supplementary Material}
\end{center}
\fi
\appendix
The supplement begins with the context-dependent rate models used in the
applications. We then give additional accuracy and cost results, followed by
the CME construction, inference details, and numerical checks. The notation
is the same as in the main paper.

\section{Neural Context-to-Rate Learning}
\label{sec:neural}
This section extends the constant-rate model to coefficients that vary with
experimental context while retaining a common support.
\subsection{Context-dependent rates}
\label{sec:context_rates}

Let $\xi$ denote an experimental condition, such as an input concentration.
We keep the candidate library fixed across conditions and write
\begin{equation}
\label{eq:context_generator}
\btheta(\xi)=\bm s\odot\bm r(\xi;\bw),\qquad
\bA(\xi)=\bA_0+
\sum_{j=1}^{M_{\rm lib}}\theta_j(\xi)\bA_j.
\end{equation}
where $\odot$ denotes componentwise multiplication. The trainable weights
$\bw$ determine the rate map, and $s_j\in\{0,1\}$ is a support mask shared by
all contexts. For a CME library, the parameterization or optimization
constraints enforce $r_j(\xi;\bw)\geq0$. The rate map may contain fitted
constants, a linear context model, or a neural network with nonnegative
outputs. The mask selects the possible mechanisms, and the rate map determines
how their rates change with context. A selected mechanism may still have zero
rate under a particular condition.

For context $\xi_c$, let $\hat\bp_{c,t}$ denote the observed full-state
snapshot and $\Delta t_{c,t}$ the next observation interval. Holding the
context fixed within each interval, we extend~\eqref{eq:loss} as
\begin{equation}
\label{eq:context_loss}
\calL_{\rm ctx}(\bw;\bm{s})=
\frac12\sum_{c=1}^{N_c}\sum_{t=0}^{T_c-1}
\left\|e^{\bA(\xi_c)\Delta t_{c,t}}\hat\bp_{c,t}
-\hat\bp_{c,t+1}\right\|_2^2.
\end{equation}
Here $N_c$ is the number of training contexts. Each additional initial
condition contributes another snapshot sequence to the loss. Support
selection uses only the training contexts and assumes that the possible
mechanisms are common to all of them, although their rates may differ.

\subsection{Gradients through a context-to-rate map}
\label{sec:context_gradient}

For a fixed support mask, the adjoint Krylov calculation computes the gradient
with respect to the rates $\btheta(\xi_c)$. Applying the chain rule gives
\begin{equation}
\label{eq:context_chain}
\nabla_{\bw}\calL_{\rm ctx}
=\sum_{c=1}^{N_c}
\bigl(D_{\bw}\bm r(\xi_c;\bw)\bigr)^T
\left(\bm{s}\odot\nabla_{\btheta(\xi_c)}\calL_{\rm ctx}\right).
\end{equation}
For a neural rate map, reverse differentiation applies the network Jacobian
to the coefficient gradient. A constant-rate model uses the coefficient
gradient directly. Each observed transition has its own forward and adjoint
Krylov spaces. We reuse these spaces across the library directions for that
transition, but not across contexts.

\begin{figure}[tbp]
\centering
\begin{tikzpicture}[>=stealth, font=\small,
box/.style={draw,rounded corners,align=center,text width=2.35cm,minimum height=1.05cm},
arr/.style={->,thick}]
\node[box] (c) at (0,0) {Context $\xi$};
\node[box] (nn) at (3.1,0) {Neural rate map\\$\bm r(\xi;\bw)$};
\node[box] (a) at (6.2,0) {Sparse operator\\$\bA_0+\sum_j\theta_j\bA_j$};
\node[box] (e) at (6.2,-2) {Exponential action\\$\bm y=e^{h\bA}\bv$};
\node[box] (loss) at (3.1,-2) {Observations / loss\\$\ell(C\bm y,\bm z)$};
\node[box] (vjp) at (0,-2) {Adjoint Krylov\\parameter gradient};
\draw[arr] (c)--(nn);\draw[arr] (nn)--(a);\draw[arr] (a)--(e);
\draw[arr] (e)--(loss);\draw[arr] (loss)--node[above]{$\nabla_{\bm y}\ell$}(vjp);
\draw[arr,dashed] (vjp.north) to[out=90,in=270] node[above,fill=white,inner sep=1pt]{$\nabla_{\btheta}\ell$}(nn.south);
\node[align=center,font=\footnotesize] at (3.1,-3.2)
{Network chain rule ($(D_{\bw}\bm r)^T\nabla_{\btheta}\ell$)\\
Chained states use $\nabla_{\bv}\ell=e^{h\bA^*}\nabla_{\bm y}\ell$.};
\end{tikzpicture}
\caption{Gradient calculation for a context-dependent rate model. Solid
arrows show the forward model and the derivative of the loss. The dashed
arrow passes the coefficient gradient to the network. The optional support
mask is separate from the network weights.}
\label{fig:neural_pipeline}
\end{figure}

\subsection{Brusselator (grouped neural rates and an ungrouped comparison)}
For the neural Brusselator model, we use a $41\times41$ count grid
($n=1681$), a 40-direction stoichiometry--propensity library, and the context
$\xi=(A,B)$. The library is divided into four known stoichiometric channels,
each with ten candidate propensity functions. Within a channel, a softmax map
produces nonnegative weights that sum to one. We gradually lower its
temperature so that the weights concentrate on a small number of candidates.
A neural network maps the context to the four channel amplitudes, and a second
stage refits their rates after selection. We use two complementary initial
conditions to distinguish the candidate responses. Because the architecture
imposes the four channel groups, selection within these groups is not an
unrestricted library search. The fit selects the four true directions. On
the test contexts, the mean relative errors in $A$, $B$, $k_2$, and $k_4$ are
$1.2\%$, $0.4\%$, $2.4\%$, and $2.8\%$, respectively
(Figure~\ref{fig:neural_results}).

For comparison with context-aware reactive SINDy, we regress trajectory
increments on instantaneous propensities with a nonnegative elastic-net fit.
Across three seeds, this fit retains all four true directions together with
$4\pm1$ false directions. A variant with shared coefficients retains
$6\pm1$ false directions. These methods use trajectory increments, whereas
the neural exponential model fits histogram snapshots. The comparison is
therefore between complete inference procedures and does not isolate the
gradient calculation.

We also test an ungrouped linear context map on a smaller noiseless problem.
All five runs recover the four active directions, with a mean coefficient
error of $0.875\%$ and a test-path error of $1.11\%$.
Section~\ref{sec:context_experiment} gives the calibration and confirmation
protocol. This experiment tests support selection without channel
constraints, but its linear rate map does not test a neural architecture.

\subsection{Stuart--Landau diffusion across a bifurcation}
On a $30\times30$ grid ($n=900$), we discretize
$\partial_t p=-\nabla\cdot(\bm b p)+D\Delta p$ with $D=0.15$ and
\[
\bm b(x,y;\mu)=
\begin{pmatrix}\mu x-y-(x^2+y^2)x\\x+\mu y-(x^2+y^2)y\end{pmatrix}.
\]
A degree-three drift library contains 20 directions, of which eight are
active. A neural map with two width-32 $\tanh$ hidden layers learns their
coefficients across eight contexts $\mu\in[-0.3,1.1]$. We start from an
off-center Gaussian so that the observations contain rotational motion. A
centered radial density would not contain this information. With an
observation interval of one, the initial sparse fit and the subsequent refit
recover all eight active terms. For seed 42, the mean test-context errors are
approximately $1.5\%$ for the coefficient determined by $\mu$ and $0.5\%$ for
the rotation coefficient.

An RK4 comparison with 32 substeps selects two terms, only one of which is
true. It uses 100 discovery epochs and takes 294 seconds. The exponential
model uses 800 discovery epochs and takes 92 seconds, so these timings do not
compare equal optimization budgets. We also train unstructured neural ODEs
with hidden widths 16 and 64, corresponding to 30,004 and 120,388 parameters.
Both models use 32 RK4 substeps per unit time and 800 epochs. Across three
seeds, their mean extrapolation losses at $t=6$ are
$6.27\times10^{-3}$ and $5.88\times10^{-3}$. Negative values occur in
$44.5\%$ and $43.6\%$ of the grid cells, and the mean mass errors are $0.030$
and $0.050$. Further details, including the support fit at a shorter
observation interval, appear below.

\begin{figure}[tbp]
\centering\includegraphics[width=\linewidth]{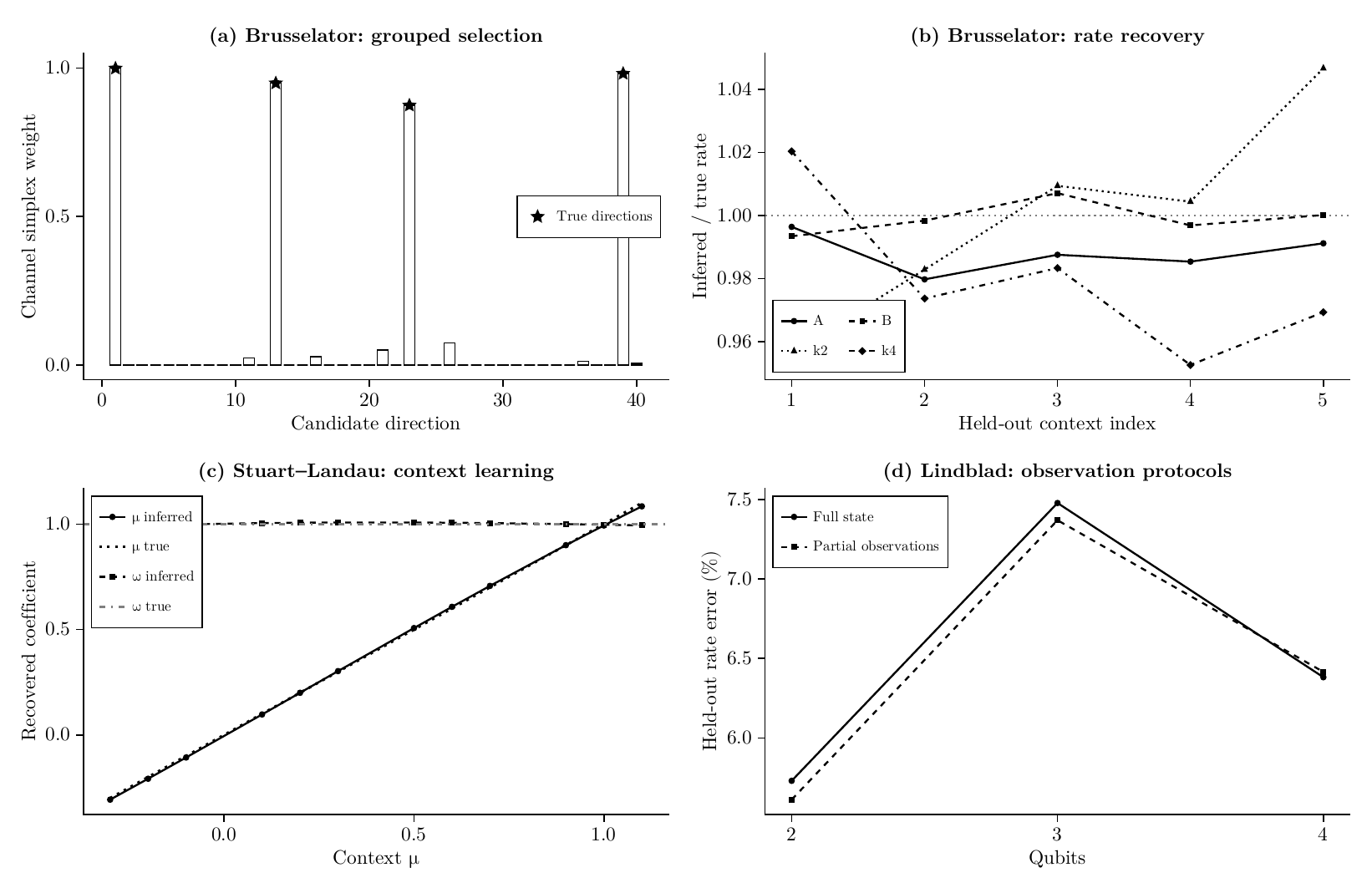}
\caption{Context-dependent rate experiments. Black lines, markers, and line
styles distinguish the recorded runs. (a) Grouped Brusselator simplex
weights, with the true terms marked by stars. (b) Brusselator rate ratios on
the test contexts. (c) Stuart--Landau coefficient functions. (d) Lindblad
rate errors under the two known-support observation protocols.}
\label{fig:neural_results}
\end{figure}
\subsection{Neural Lindblad rates and partial observations}
\label{sec:lindblad_neural}

A separate experiment fixes the support of the $2q-1$ nearest-neighbor
exchange and dephasing terms and learns their rates as functions of a scalar
context $\xi$.
The fixed field in this experiment is $0.3\sum_i\sigma_z^{(i)}$.
The prescribed rates are
$J_i(\xi)=0.5[1+0.4\sin(i\pi\xi)]$ and
$\gamma_i(\xi)=0.1[1+0.5\cos^2(i\pi\xi)]$.
A network with two width-64 $\tanh$ hidden layers represents the rate map. A
softplus output $x\mapsto\log(1+e^x)$ and a fixed positive floor keep every
inferred rate positive. Ten training contexts
$0.05,0.15,\ldots,0.95$ and five test contexts
$0.10,0.30,\ldots,0.90$ use observations at times $0.3,1,3$.
Adam uses step size $0.003$ for 600 epochs at $q=2,3$ and 800 epochs at $q=4$.

Full-state training fits consecutive density snapshots. For partial
observations, we instead propagate the known initial state
$\rho_0=(|+\rangle\langle+|)^{\otimes q}$ directly to each observation time,
minimizing $\frac12\sum_t\|C e^{t\bA(\bm r(\xi;\bw))}
\operatorname{vec}(\rho_0)-\bm y_t\|^2$.
The rows of $C$ measure the identity and all single-site Pauli expectations.
They also measure nearest-neighbor $XX$, $YY$, and $ZZ$ expectations. The
observation map has $6q-2$ rows.
The partial-observation fit does not use intermediate full density matrices.
Thus the full-state and partial experiments differ in both their observation
maps and their losses. Table~\ref{tab:lindblad_context} reports one
initialization, seed 42, for each case. We average the componentwise relative
rate errors over the test contexts. These experiments evaluate neural rate
fitting on a fixed support under the two observation protocols.

\begin{table}[tbp]
\centering
\caption{Known-support neural Lindblad fitting. The partial data are noiseless
expectation values from a known initial state. Errors are measured in the
rates on the test contexts.}
\label{tab:lindblad_context}
\begin{tabular}{rrrrr}
\toprule
Qubits & Full-state dim. & Observables & Full-state error & Partial error \\
\midrule
2 & 16 & 10 & $5.73\%$ & $5.61\%$ \\
3 & 64 & 16 & $7.48\%$ & $7.37\%$ \\
4 & 256 & 22 & $6.38\%$ & $6.42\%$ \\
\bottomrule
\end{tabular}
\end{table}

With sparse operators, full-state neural rate fitting also reaches five and
six qubits. For seed 42 and 800 epochs, the mean test-context rate errors are
$8.77\%$ and $8.76\%$, and the training times are 256 and 930 seconds. These
are known-support fits of nine and eleven coefficients. They are separate
from the unknown-support experiments through six qubits and the gradient
scaling experiment through eleven qubits.

Across three seeds, local $Z$ observations alone produce a nearly zero
observation loss but mean rate errors of $293$--$338\%$. Adding single-site
and nearest-neighbor Pauli observations reduces the mean errors to $5.44\%$,
$7.93\%$, and $6.12\%$ at two, three, and four qubits. Simulating 100
measurement shots increases these errors to $11.5\%$, $16.3\%$, and
$15.9\%$. With 10,000 shots, the errors approach their noiseless values.
Thus a small observation loss from local $Z$ measurements need not imply
accurate rates. Results for the intermediate shot counts and smaller
observation sets appear below.

\section{Additional Discussion}
\label{app:additional_discussion}
The forward and adjoint Krylov spaces are shared by all coefficient
directions on an observation interval. Only the library contractions grow
with the number of directions. Derivatives with respect to the input state
and interval length connect the exponential action to the calculations that
precede and follow it. The cost analysis in the main paper identifies when
reuse of the two spaces reduces the runtime.

The applications vary the active coefficient set, the dependence of the
rates on context, and the observations entering the loss. Each observed
transition requires its own forward and adjoint spaces. In the nonlinear
ablation, we isolate the input-state derivative by omitting that term from the
chain rule. The CME, diffusion, and Lindblad examples include real and complex
generators with different conservation laws.

Reuse is most effective when construction of the two Krylov spaces costs more
than contraction with the operator library. Differentiated Arnoldi and dense
automatic differentiation are faster in other parts of the measured
accuracy--cost range. The forward approximation, adjoint approximation, and
quadrature must all be resolved before errors caused by support ambiguity,
observation noise, model mismatch, or time discretization can be interpreted.

Adaptive selection of the forward and adjoint depths remains useful when
library storage and the full training cost are included. Automatic
thresholding of context-dependent rate functions is a separate inference
problem. The experiments below measure gradient accuracy and record the
behavior of the stated fitting procedures on synthetic data.

\subsection{Failure modes of heuristic stopping estimates}
\label{app:stopping_failure}

Figure~\ref{fig:stopping_failure} shows two ways in which a heuristic stopping
estimate can become small before the approximation is accurate. Panel (a)
uses the symmetric generator of a reflecting random walk on 200 states, an
interval of length 80, the stationary initial vector
$\bv=200^{-1/2}(1,\ldots,1)^T$, and terminal vector
$\nabla_{\bm y}\ell=\bdelta_1$. The generator is negative semidefinite, so its
semigroup is contractive in the 2-norm. The forward Arnoldi process terminates
after one step, while the adjoint depth ranges from 1 to 30. The multiplicative
estimate is therefore zero at every plotted depth, even though the true
Frobenius error decreases only from $20.66$ to $0.137$.

The curve labeled ``additive residual bound'' controls the full Fr\'{e}chet
matrix rather than an individual gradient component. The unit-norm forward
path is exact, so the expansion used in
Theorem~\ref{thm:residual_gradient_error} gives
\[
 \|L_{\exp}(h\bA,(\nabla_{\bm y}\ell)\bv^*)-\widetilde L\|_F
 \leq |h|\int_0^1\|\bm e_\lambda(s)\|_2\,ds.
\]
Variation of constants and contractivity in the 2-norm bound the right-hand
side by the integrated adjoint Arnoldi residual. The resulting bound decreases
from $79.0$ to $0.527$ over the plotted depths.

Panel (b) uses the final interval, with $h=0.5$, of the stiff 1681-state
Brusselator problem and a 40-direction cubic library. The plotted
approximation and the depth-420 reference both use the two-sided projection,
so quadrature error is absent from this comparison. At depth 16, the
successive-approximation estimate of
\cite[Eq.~(60)]{kandolf2021lowrank} is $5.40\times10^{-9}$, whereas the
normalized library-gradient error is $6.40\times10^{-7}$. At depth 120, these
values are $2.08\times10^{-9}$ and $4.07\times10^{-8}$. The additive curve
uses the $1$-norm for the forward error and the $\infty$-norm for the adjoint
error in Corollary~\ref{cor:ctmc_residual_bound}. The CTMC semigroup is
nonexpansive in the first norm, and its adjoint is nonexpansive in the second.
Although the bound remains above the true error, its smallest value through
depth 120 is $7.32\times10^{-6}$ and does not certify the target $10^{-8}$.
For this problem, the bound detects unresolved error but does not determine an
efficient stopping depth.

\begin{figure}[tbp]
\centering
\includegraphics[width=\linewidth]{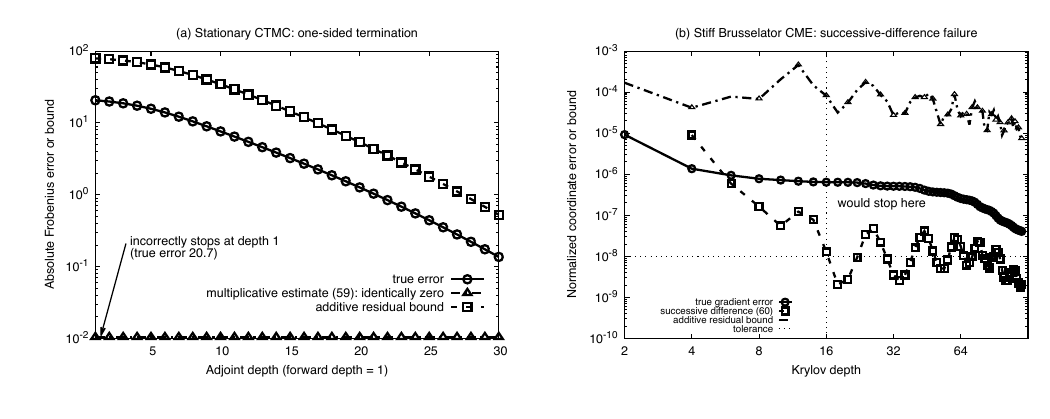}
\caption{Two failures of heuristic stopping estimates. (a) One-sided
termination makes the residual-product estimate of
\cite[Eq.~(59)]{kandolf2021lowrank} identically zero. (b) In the stiff
Brusselator problem, the difference between successive projected
approximations crosses $10^{-8}$ before the true normalized gradient error.
The additive residual bounds remain valid but conservative.}
\label{fig:stopping_failure}
\end{figure}

\subsection{Derivation of the complexity bound}
Proposition~\ref{prop:complexity} separates the construction of the
Krylov spaces, evaluation at the quadrature nodes, and contraction with the
library directions. In one Arnoldi process, the $m_K$ applications of $\bA$
or $\bA^*$ cost
$O(m_K\nnz(\bA))$. At step $r$, modified Gram--Schmidt orthogonalizes a
length-$n$ vector against $r$ basis vectors. Hence
\[
 O\left(n\sum_{r=1}^{m_K}r\right)=O(nm_K^2).
\]
Each exponential action requires a forward basis and an adjoint basis. Let
$N_{\rm act}$ denote the number of such actions. The first term of the bound
absorbs the constant factors associated with the two constructions.

At one quadrature node, the two reduced matrix exponentials cost
$O(m_K^3)$, and multiplication by the two bases costs $O(nm_K)$. Repeating
these operations over $n_q$ nodes and $N_{\rm act}$ actions gives the second
term. For a fixed action, node, and direction, the sparse product
$\bA_j\bm f_q$ and its inner product with $\bm\lambda_q$ cost
$O(\nnz(\bA_j)+n)$. Summing over directions, nodes, and actions gives
\[
 O\left(N_{\rm act}n_q\sum_{j=1}^{M_{\rm lib}}(\nnz(\bA_j)+n)\right).
\]
For a CME direction, $\nnz(\bA_j)=O(n)$, so the final term becomes
$O(N_{\rm act}n_qM_{\rm lib}n)$. With $\bA$ and its sparsity pattern fixed,
the library size appears only in this contraction term.

\subsection{Additional accuracy and timing results}
Table~\ref{tab:error_decomposition_supp} gives the numerical values used in
the Fokker--Planck refinement discussion of the main paper.
\begin{table}[tbp]
\centering
\caption{Relative coefficient-gradient error against the dense
block-Fr\'{e}chet reference~\eqref{eq:block_frechet} for the nonnormal
Fokker--Planck case at $\Delta t=0.1$.}
\label{tab:error_decomposition_supp}
\begin{tabular}{rrrr}
\toprule
$m_K$ & $n_q=8$ & $n_q=24$ & $n_q=32$ \\
\midrule
20 & $9.1\times10^{-6}$ & $1.0\times10^{-5}$ & $1.0\times10^{-5}$ \\
30 & $9.6\times10^{-6}$ & $1.1\times10^{-8}$ & $1.1\times10^{-8}$ \\
40 & $9.7\times10^{-6}$ & $1.7\times10^{-13}$ & $1.7\times10^{-13}$ \\
60 & $9.7\times10^{-6}$ & $1.8\times10^{-15}$ & $2.0\times10^{-15}$ \\
\bottomrule
\end{tabular}
\end{table}

\begin{figure}[tbp]
\centering\includegraphics[width=\linewidth]{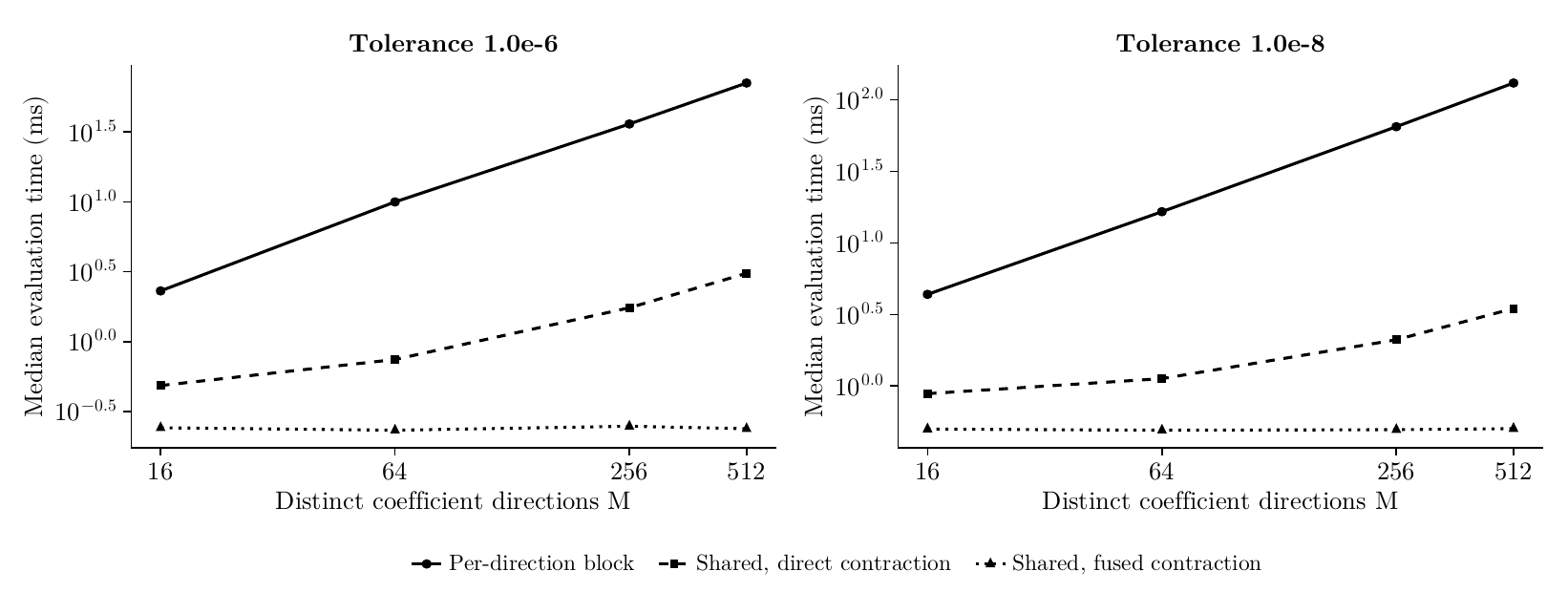}
\caption{Library scaling at matched accuracy on an Apple M1 Pro. Each time is
the median of seven evaluations after warmup and includes the forward action,
the coefficient gradient, and the input-state derivative. Lines connect the
measured configurations and are not extrapolated scaling laws.}
\label{fig:library_accuracy_supp}
\end{figure}

\begin{figure}[tbp]
\centering\includegraphics[width=\linewidth]{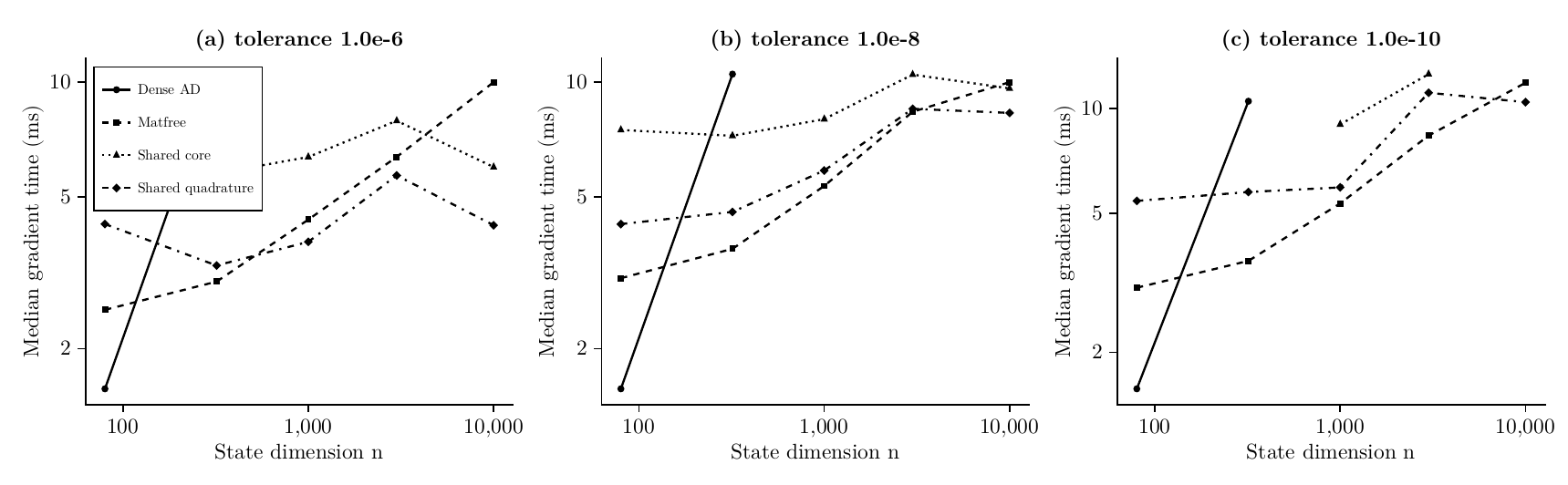}
\caption{GPU comparison within one process at fixed accuracy. Each curve uses
the fastest configuration that meets the tolerance for both the coefficient
gradient and the input-state derivative. Compilation and warmup are excluded.}
\label{fig:matfree_bw_supp}
\end{figure}

\section{CME Construction and Sparse-Inference Details}
\label{app:cme_setup}
This section describes the CME construction, initialization, and sparse
optimization used in the supporting experiments.
\subsection{Trajectory-based initialization}\label{app:initialization}

Several CME experiments use exact jump records from stochastic simulation
algorithm (SSA) paths \cite{gillespie1977exact}, rather than the snapshot-only
data used in Section~\ref{sec:experiments}. For candidate direction $j$,
let $g_j(\bx)$ be its known state-dependent propensity factor and let
$\theta_j$ be its unknown coefficient. Let $c_j^{(i)}$ be the number of times
that direction fires in path $i$, and let $X^{(i)}(t)$ be the corresponding
state. The exact-jump maximum-likelihood initializer is
\begin{equation}
\label{eq:init}
\hat{\theta}^{(0)}_j
=
\frac{\sum_{i=1}^N c_j^{(i)}}
{\sum_{i=1}^N \int_0^{T_{\mathrm{obs}}} g_j(X^{(i)}(t))\,dt}.
\end{equation}
This estimate initializes the subsequent support selection and rate
refinement. Equation~\eqref{eq:init} requires labeled reaction events so that
$c_j^{(i)}$ is known for each candidate. When several mechanisms produce the
same observed state change and their event labels are unavailable, the
individual counts cannot be separated. We then use a joint likelihood, a
nonnegative estimate from snapshot data, or a prior estimate.

\begin{remark}[Initialization from snapshot data]
With snapshot distributions but no trajectory records, a coarse nonnegative
least-squares estimate or a domain-specific prior can replace \eqref{eq:init}.
\end{remark}
The synthetic path benchmarks use the exact-jump initializer.

\subsection{Iterative thresholded optimization with sensitivity scaling}\label{app:fisher}

Starting from this estimate, the sparse inverse problem alternates smooth
local optimization with hard thresholding
\cite{brunton2016discovering,blumensath2009iterative}. To compare rates that
span several orders of magnitude, we optimize the scaled variables
$\bphi=\bD\btheta$. Here $\bD=\diag(d_j)$ is positive and $d_j$ is the norm
of column $j$ of the initial Jacobian. Thresholding therefore compares
coefficients on their sensitivity scales rather than by their raw magnitudes.
The scaled objective is
$\widetilde{\calL}(\bphi)=\calL(\bD^{-1}\bphi)$.

\begin{algorithm}[t]
\caption{Iterative Thresholded Optimization with Optional Sensitivity Scaling}
\label{alg:stls}
\begin{algorithmic}[1]
\REQUIRE Nonnegative initializer $\btheta^{(0)}$, diagonal scaling $\bD$ (identity if unused), threshold fraction $\eta_{\rm thr}\in(0,1)$, inner iteration limit $L_{\rm inner}$, outer iterations $L_{\rm outer}$
\ENSURE Sparse estimate $\hat{\btheta}$
\STATE $\bphi^{(0)}\leftarrow \bD\btheta^{(0)}$
\FOR{$\ell=0,\ldots,L_{\rm outer}-1$}
    \STATE $S_\ell \leftarrow \supp(\bphi^{(\ell)})$
    \STATE $\bphi^{(\ell+1/2)}\leftarrow \mathrm{LBFGS}(\tilde{\calL}|_{S_\ell},\nabla_{S_\ell}\tilde{\calL},\bphi^{(\ell)})$ for at most $L_{\rm inner}$ iterations
    \STATE $\bphi^{(\ell+1/2)}\leftarrow\max(\bphi^{(\ell+1/2)},0)$
    \STATE $\tau_\ell\leftarrow\eta_{\rm thr}\|\bphi^{(\ell+1/2)}\|_\infty$
    \STATE Let $\bphi^{(\ell+1)}\leftarrow\bphi^{(\ell+1/2)}$
    \STATE Let $\phi_j^{(\ell+1)}=0$ whenever $|\phi_j^{(\ell+1/2)}|<\tau_\ell$
\ENDFOR
\STATE Repeat the L-BFGS solve and nonnegative projection on the fixed support $\supp(\bphi^{(L_{\rm outer})})$
\STATE \textbf{return} $\hat{\btheta}=\bD^{-1}\hat{\bphi}$
\end{algorithmic}
\end{algorithm}

Each smooth optimization uses the adjoint gradient. Thresholding selects the
reactions, and Jacobian-column scaling makes directions with different
physical units comparable. Because the loss is nonlinear in the
coefficients, this procedure is not sequential thresholded least squares. The
experiments use the stated nonnegative projection rather than a
bound-constrained inner solve.

\begin{figure}[tbp]
\centering
\includegraphics[width=0.88\textwidth]{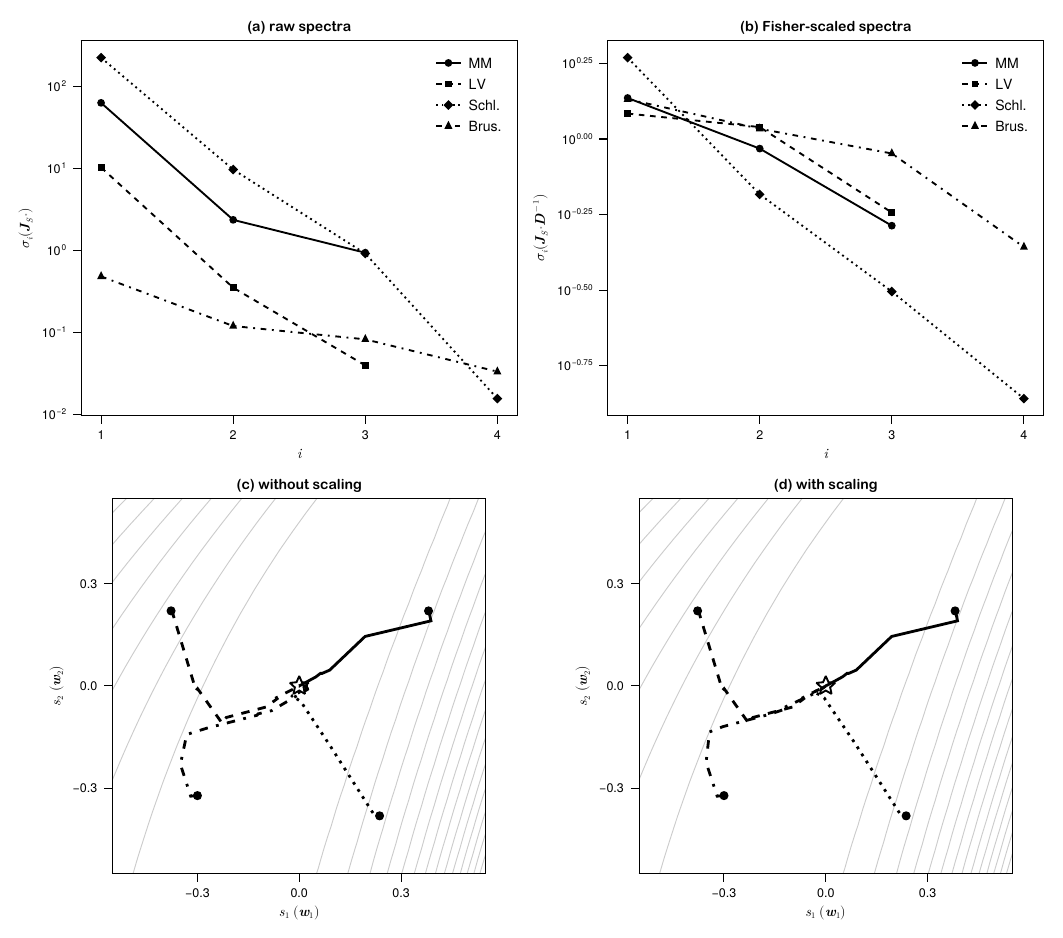}
\caption{Jacobian-column sensitivity scaling. \textbf{(a)} Singular values
$\sigma_i(\bJ_{S^*})$ for Michaelis--Menten, Lotka--Volterra, Schl\"{o}gl,
and Brusselator. The wide spread for Schl\"{o}gl reflects its
three-order-of-magnitude range of rates. \textbf{(b)} Column scaling by
$\bJ_{S^*}\bD^{-1}$ reduces the condition numbers, with the largest reduction
for Schl\"{o}gl. \textbf{(c,d)} Schl\"{o}gl loss slices along the right
singular vectors $(\bw_1,\bw_2)$ associated with the largest singular values.
The curves show optimization from four initial estimates, and the white star
marks $\btheta^*$. Without scaling~\textbf{(c)}, the iterates follow elongated
contours before converging. With scaling~\textbf{(d)}, they take shorter and
more direct paths.}
\label{fig:info_geometry}
\end{figure}

\subsection{Support stability across numerical settings and seeds}\label{app:support_stability}
This experiment separates numerical stability from variation across data
realizations. For each of six seeds, the selected support is unchanged across
two Krylov depths and four quadrature orders. Seeds 1, 4, and 6 recover the
true support in all eight settings. Seeds 2, 3, and 5 retain the same extra
direction in all eight settings. Thus the aggregate count 24/48 represents
three successful seeds repeated across eight numerical settings, not 48
independent recovery trials. Increasing the depth or quadrature order within
this grid does not remove the false positive.

We therefore distinguish stability under numerical refinement from the
effects of threshold selection, correlated library directions, and variation
in the data. Figure~\ref{fig:support_controls} places the numerical check next
to a separate perturbation of the initial estimate. This perturbation does not
change the observations. The noisy KS and finite-shot Lindblad studies in
Section~\ref{app:controls} examine variation in the observations.
\begin{figure}[tbp]
\centering\includegraphics[width=\linewidth]{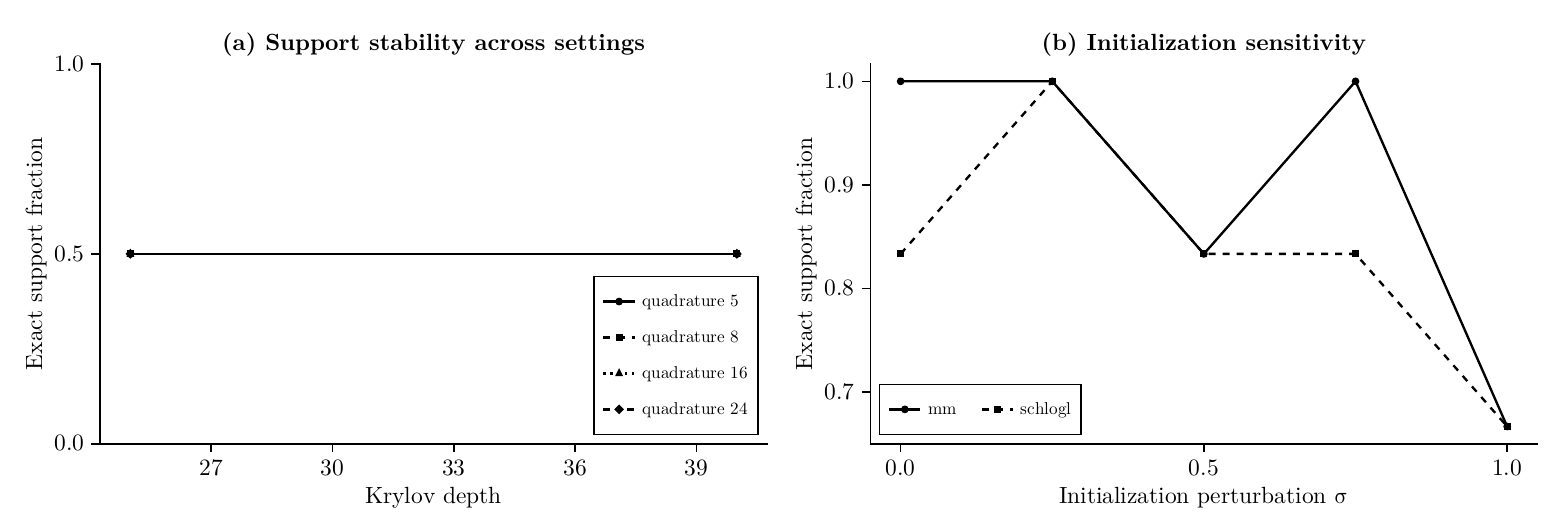}
\caption{Support stability. The left panel shows the exact-support fraction at
each Krylov depth and quadrature order. The right panel shows the exact-support
fraction under initialization perturbations for Michaelis--Menten and
Schl\"{o}gl. Fractions use all seeds at each setting.}
\label{fig:support_controls}
\end{figure}

\subsection{Finite-state CME}

Consider $d$ chemical species with copy-number state
$\bx=(x_1,\ldots,x_d)\in\N^d$. Each reaction channel $m\in[M]$ has a
stoichiometry vector $\bzeta_m\in\mathbb{Z}^d$ and a nonnegative propensity
$\alpha_m(\bx;\btheta_m)$. The propensity is the instantaneous transition
rate from $\bx$ to $\bx+\bzeta_m$. We model the system as a continuous-time
Markov chain (CTMC) on a truncated state space
$\calX=\{\bx_1,\ldots,\bx_n\}\subset\N^d$. The conservative boundary
convention in Eq.~\eqref{eq:cme_library_direction} suppresses a reaction
whose destination lies outside $\calX$. It preserves total probability within
the retained state space and differs from the leaking finite-state projection
of~\cite{munsky2006finite}.

The probability vector $\bp(t)\in\R^n$, with $p_i(t)=\Pr(X(t)=\bx_i)$,
satisfies the \emph{chemical master equation}
\begin{equation}
\label{eq:cme}
\frac{d\bp}{dt} = \bA(\btheta)\,\bp,
\end{equation}
where, with
$\alpha_m^{\calX}(\bx;\btheta_m)=
\alpha_m(\bx;\btheta_m)\mathbf 1_{\{\bx+\bzeta_m\in\calX\}}$,
the generator $\bA(\btheta)\in\R^{n\times n}$ has entries
\begin{equation}
\label{eq:generator}
[\bA(\btheta)]_{\by,\bx}
=\sum_{m=1}^M\alpha_m^{\calX}(\bx;\btheta_m)
\left(\mathbf 1_{\{\by=\bx+\bzeta_m\}}
-\mathbf 1_{\{\by=\bx\}}\right),
\qquad \bx,\by\in\calX.
\end{equation}
The columns sum to zero, so $\mathbf{1}^T\bA=\mathbf{0}$ and total probability
is conserved.
Over one observation interval $\Delta t$, the solution of~\eqref{eq:cme} is
\begin{equation}
\label{eq:discrete}
\bp(t+\Delta t) = e^{\bA(\btheta)\Delta t}\,\bp(t),
\end{equation}
so the matrix exponential
$\bP_{\Delta t}=e^{\bA\Delta t}$ is the transition operator over the interval.

\subsection{Reaction library}
\label{sec:massaction}

The reaction library has the affine form~\eqref{eq:affine}. Each
mass-action direction represents an elementary reaction with a physical rate
constant, so the recovered support identifies a reaction network. The
Fokker--Planck and Lindblad examples in Section~\ref{sec:affine} use
operator libraries constructed from their corresponding physical terms. The
gradient calculation does not depend on the choice of library.

In stochastic mass-action kinetics, a reaction consuming $\nu_i$ molecules
of species $i$ fires at a rate proportional to the number of distinct reactant
subsets. Counting ordered selections and dividing out permutations gives the
falling-factorial form
\begin{equation}
\label{eq:binom_propensity}
\phi_{\bnu}(\bx)
= \prod_{i=1}^{d}\binom{x_i}{\nu_i}
= \prod_{i=1}^{d}\frac{x_i(x_i-1)\cdots(x_i-\nu_i+1)}{\nu_i!},
\end{equation}
where $\bnu=(\nu_1,\ldots,\nu_d)\in\N^d$ is the vector of reactant stoichiometries.
For a single species, the order-$0$ term is $\binom{x}{0}=1$, which describes
a source or creation reaction independent of copy number. The order-$1$ term
is $\binom{x}{1}=x$, as in linear degradation or transport. The order-$2$
term is $\binom{x}{2}=x(x{-}1)/2$, which counts unordered pairs in a
dimerization reaction. The order-$3$ term is
$\binom{x}{3}=x(x{-}1)(x{-}2)/6$, which counts triples in a trimerization
reaction.
This convention counts unordered reactant combinations. An alternative
falling-factorial convention absorbs the factorial into the rate constant.

\subsubsection{Library parameterization.}
We express each propensity as a linear combination of $K$ basis functions of
the form~\eqref{eq:binom_propensity}:
\[
\alpha_m(\bx;\btheta_m) = \btheta_m^T\bphi(\bx), \qquad \bphi(\bx)\in\R^K,\quad \btheta_m\geq 0,
\]
where $\bphi(\bx)$ collects the combinatorial monomials up to a chosen order.
For a one-species system the cubic library is
$\bphi(x)=\bigl[1,\;x,\;\tfrac{x(x-1)}{2},\;\tfrac{x(x-1)(x-2)}{6}\bigr]^T$.
For $d$ species, we include all multivariate products
$\prod_i\binom{x_i}{\nu_i}$ with $\sum_i\nu_i\leq r$ for a chosen order $r$.
Because each propensity is linear in $\btheta_m$, the generator is linear in
the full coefficient vector $\btheta=\mathrm{vec}(\{\btheta_m\})$:
\begin{equation}
\label{eq:linear_decomp}
\bA(\btheta)=\sum_{m=1}^M \sum_{k=1}^K \theta_{m,k}\,\bA_m^{(k)},
\qquad
\theta_{m,k}\geq 0.
\end{equation}
Once the library and state space are fixed, the direction matrices
$\bA_m^{(k)}=\partial\bA/\partial\theta_{m,k}$ have entries
\begin{equation}
\label{eq:dA}
[\bA_m^{(k)}]_{\bx+\bzeta_m,\,\bx} = \phi_k(\bx),
\qquad
[\bA_m^{(k)}]_{\bx,\,\bx} = -\phi_k(\bx),
\end{equation}
when $\bx+\bzeta_m\in\calX$, with all entries in that column zero otherwise.
We compute these matrices once and reuse them throughout inference. The
synthetic libraries contain the true reactions and inactive alternatives. We
therefore seek both the active set
$S^*=\{j:\theta_j^*>0\}$ and the values of its nonzero coefficients.

\subsection{Conditional identifiability and threshold separation}
\label{sec:theory}
Let $F_S$ collect the sampled predictions, including their observation maps,
as a function of the active coefficients, and let
$J_S=DF_S(\btheta^*)$. If $J_S$ has full column rank, it contains a
nonsingular square submatrix formed from a subset of its rows. Applying the
inverse function theorem to the corresponding output coordinates shows that
two sufficiently close coefficient vectors on $S$ cannot produce the same
sampled output. Thus the coefficients are locally identifiable on $S$.
Injectivity of the full matrix-exponential Fr\'{e}chet map does not by itself
imply this rank condition. The initial states and observation maps may remove
informative directions. The diagnostic below instead assumes a fully
observed continuous path from one initial state. It complements, but does not
replace, a rank and conditioning check of the sampled Jacobian for the actual
measurements.

For threshold separation, suppose that an iterate before thresholding
satisfies $\|\widetilde{\bphi}-\bphi^*\|_\infty\leq\delta$ in the chosen
scaled coordinates.
If $\delta<\tau<\min_{j\in S^*}|\phi_j^*|-\delta$, hard thresholding
recovers $S^*$. Inactive coordinates have magnitude at most $\delta$, and
active ones have magnitude at least $\min_{S^*}|\phi_j^*|-\delta$.
This statement requires control of both the active and inactive coordinates.
A positive-definite Hessian restricted to the active coordinates controls
perturbations only within the true support and gives no bound on the inactive
coordinates. We do not claim a global convergence result for the neural
optimization.

\subsection{A computable identifiability diagnostic}
\label{sec:stanhope}

For a family that is linear in its parameters, we can also compute a
continuous-path diagnostic directly. Stanhope, Rubin, and
Swigon~\cite[Thm.~2.6]{stanhope2014identifiability} show that global
identifiability of $\bA(\btheta)=\sum_j\theta_j\bA_j$ from the single initial
condition $\bp_0$ requires
\begin{equation}
\label{eq:stanhope}
\operatorname{rank}\bG(\bA)=M,
\qquad
\bG(\bA)_{:,j}=\bigl[\bA_j\bp_0;\ \bA_j\bA\bp_0;\ \ldots;\ \bA_j\bA^{n-1}\bp_0\bigr].
\end{equation}
Column $j$ records the action of direction $j$ along the Krylov sequence of
the observed path. Computing $\bG$ requires the additional vectors
$\bA^k\bp_0$ and the products $\bA_j\bA^k\bp_0$. We therefore form it as an
offline diagnostic rather than during the gradient calculation. After column
normalization, its singular values measure the conditioning of the
identifiability test, while pairwise angles reveal directions that are hard to
separate.

The Schl\"{o}gl library illustrates both effects. All eight singular values of
the normalized $\bG$ are nonzero to working precision, so
Eq.~\eqref{eq:stanhope} has full rank. Its condition number is nevertheless
$3.6\times10^{6}$, and two pairs of columns are nearly collinear. The constant
birth feature containing $k_3$ lies $0.89^\circ$ from a death direction, while
the true cubic birth feature lies $3.66^\circ$ from an inactive cubic birth
candidate. These angles explain the behavior reported in
Section~\ref{sec:experiments}. An aggressive threshold removes the term
in the first pair, and the second pair remains difficult to distinguish after
selection. Both effects arise from the library and the observed path.
Equation~\eqref{eq:stanhope} detects them before fitting. Additional initial
conditions or observation times can separate the directions when this
ambiguity is unacceptable.

\begin{figure}[tbp]
\centering
\begin{tikzpicture}
\node[inner sep=0] (initializationplot)
  {\includegraphics[width=\linewidth]{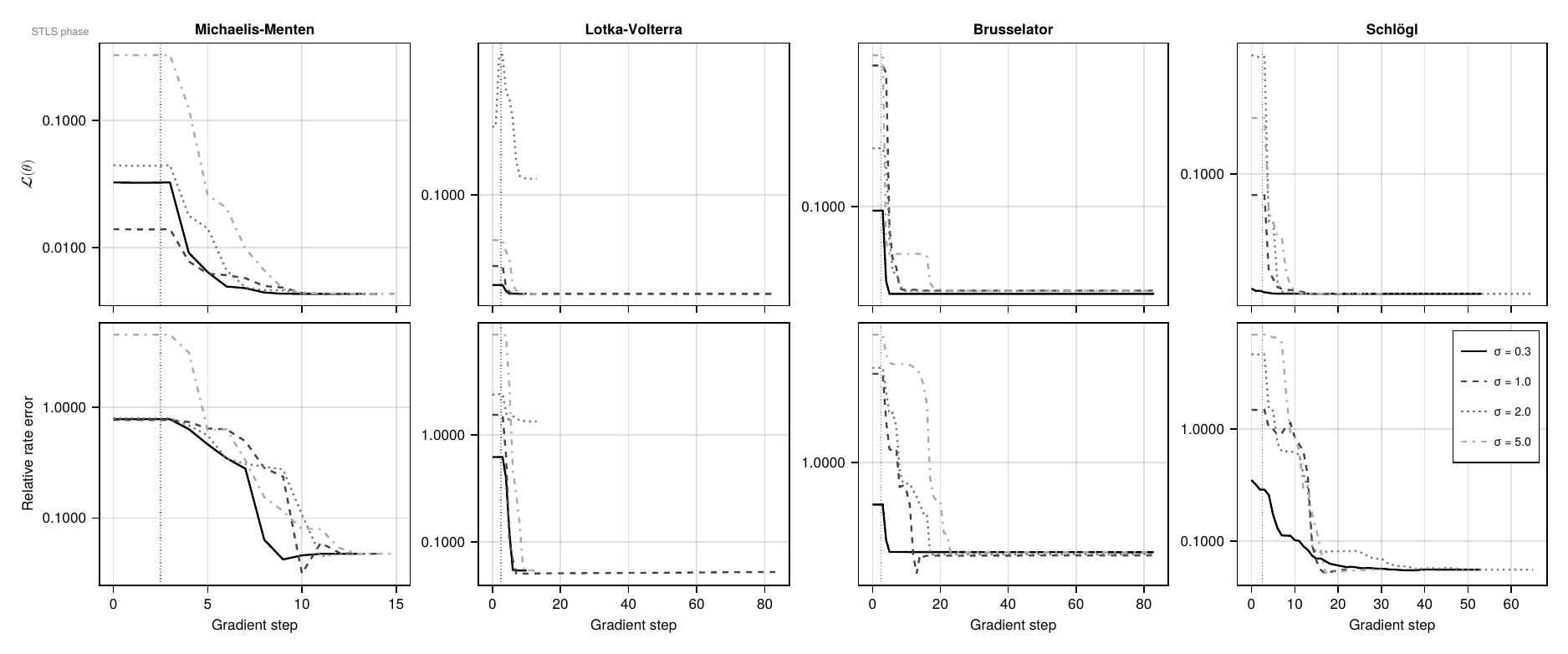}};
\fill[white] ([xshift=4pt,yshift=-2pt]initializationplot.north west)
  rectangle ++(31pt,-9pt);
\node[anchor=north west,font=\fontsize{4.5}{5.5}\selectfont,text=gray]
  at ([xshift=7pt,yshift=-5pt]initializationplot.north west)
  {Thresholding};
\end{tikzpicture}
\caption{Thresholded optimization and final refinement under perturbations
$\sigma=0.3,1,2,5$ of the initial rate estimate. Line styles and gray levels
distinguish the perturbations. The upper panels show the objective, and the
lower panels show the relative rate error. The observations are unchanged in
this experiment.}
\label{fig:initialization_paths}
\end{figure}

\section{Additional Numerical and Application Experiments}
\label{app:controls}
This section gives supporting and adverse cases from the numerical and
application studies. Each timing uses the machine and implementation on which
it was originally recorded. Comparisons across programming languages
therefore include differences in both the method and its implementation.

\begin{table}[tbp]
\centering
\caption{Kuramoto--Sivashinsky composition over five initializations. Errors
are reported as mean $\pm$ population standard deviation in percent.}
\label{tab:ks_composition_supp}
\begin{tabular}{lrr}
\toprule
Gradient calculation & Coefficient error & Prediction error \\
\midrule
Full & $1.83915\pm0.00007$ & $0.090826\pm0.000001$ \\
Input-state derivative omitted & $18.63\pm1.68$ & $1.041\pm0.093$ \\
\bottomrule
\end{tabular}
\end{table}

\begin{table}[tbp]
\centering
\caption{Lindblad identification from an overcomplete library. Each row
summarizes five independent draws of the active rates.}
\label{tab:lindblad_supp}
{\small
\begin{tabular}{rrrrrr}
\toprule
Qubits & Liouville dim. & Library size & True support & Exact & Mean rate error \\
\midrule
2 & 16   & 7  & 3 & 5/5 & $7.5\times10^{-9}$ \\
3 & 64   & 12 & 5 & 5/5 & $5.6\times10^{-9}$ \\
4 & 256  & 18 & 7 & 5/5 & $1.3\times10^{-8}$ \\
5 & 1024 & 25 & 9 & 5/5 & $1.4\times10^{-8}$ \\
6 & 4096 & 33 & 11 & 5/5 & $7.3\times10^{-9}$ \\
\bottomrule
\end{tabular}
}
\end{table}

\subsection{Extended nonlinear composition experiments}
\label{app:ks_extended}
For the split step in Eq.~\eqref{eq:ks_expanded_step}, differentiation
gives
\begin{equation}
 d\bv_k=
 \bigl(I-hcD_x\diag(\bm u_k)\bigr)d\bm u_k
 -\frac{h}{2}D_x(\bm u_k\odot\bm u_k)\,dc.
\end{equation}
For a loss $\ell(\bm u_{k+1})$, the reverse derivatives are
\begin{align}
 \nabla_{\bv_k}\ell
 &=e^{h\bA(a,b)^T}\nabla_{\bm u_{k+1}}\ell,\\
 \frac{\partial\ell}{\partial c}
 &=-\frac{h}{2}(\nabla_{\bv_k}\ell)^T D_x(\bm u_k\odot\bm u_k),\\
 \nabla_{\bm u_k}\ell
 &=\bigl(I-hcD_x\diag(\bm u_k)\bigr)^T\nabla_{\bv_k}\ell.
\end{align}
The derivatives with respect to $a$ and $b$ use the forward and adjoint paths
\[
 \bm f(s)=e^{sh\bA}\bv_k,\qquad
 \bm\lambda(s)=e^{(1-s)h\bA^T}\nabla_{\bm u_{k+1}}\ell,
\]
and the directions $-D_{xx}$ and $-D_{xxxx}$ in the main adjoint formula.

\begin{figure}[tbp]
\centering\includegraphics[width=\linewidth]{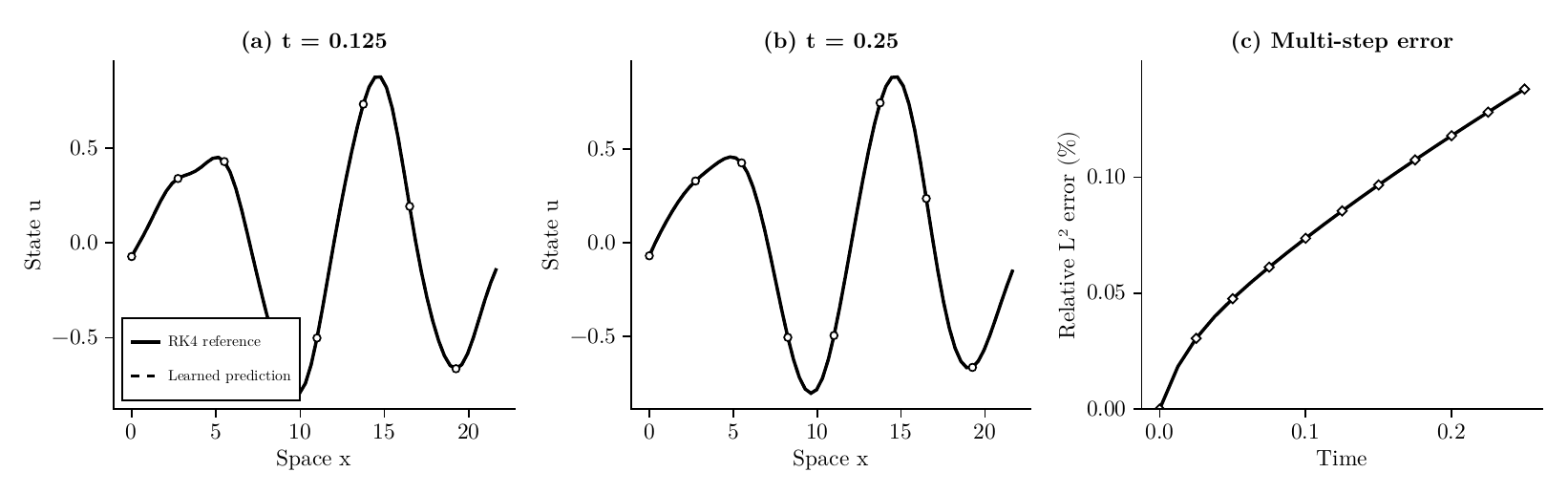}
\caption{Kuramoto--Sivashinsky test-path prediction using the complete
derivative. Solid curves show the RK4 reference. Dashed curves and open
markers show the inferred model at two times, and the right panel gives the
relative $L^2$ error. These short paths test composition of the derivatives,
not long-time chaotic prediction.}
\label{fig:ks_trajectory}
\end{figure}

\paragraph{Sparse nonlinear library.}
In a second KS experiment, the library contains 15 candidates. Its linear directions are
$-D_x$, $-D_{xx}$, $D_xD_{xx}$ and $-D_{xxxx}$. The remaining 11 candidates
are $-D_x\bm q_r(\bm u)/r$, with $r=2,\ldots,12$, where
$[\bm q_r(\bm u)]_i=u_i^r$.
The true support contains only $-D_{xx},-D_{xxxx}$ and the quadratic flux.
The training data contain 20 transitions from each of three fields over
$T=0.25$. Independent Gaussian noise ranges from zero to $1\%$ of the
root-mean-square amplitude of each snapshot. A fourth noise-free path tests
prediction through $T=0.5$. We optimize an $\ell_1$-regularized fit with
L-BFGS, apply a threshold, and refit the selected terms without a penalty.
The penalty $10^{-7}$ and threshold $0.08$ are calibrated on seed 11 at
$0.5\%$ noise and then fixed for 20 initializations at every noise level. All
80 runs recover exactly the three active terms with no false positives.
Table~\ref{tab:ks_noise_sweep} reports this support result separately from the
coefficient and prediction errors. Refitting reduces both errors through
$0.5\%$ noise but increases them at $1\%$. This experiment tests sparse
semilinear identification on a small grid and over short time intervals using
the complete derivative.

\begin{table}[tbp]
\centering
\caption{KS robustness over 20 initializations per noise level. Coefficient
and noise-free test-path errors are percentages. D denotes the initial
sparse fit and R denotes refitting on the selected support. Both stages retain
the support selected by the initial fit.}
\label{tab:ks_noise_sweep}
\begin{tabular}{rrrrrr}
\toprule
Noise & Exact & Coeff. D & Coeff. R & Test D & Test R \\
\midrule
$0\%$    & 20/20 & 5.48 & 1.72 & 0.513 & 0.140 \\
$0.25\%$ & 20/20 & 5.07 & 1.98 & 0.480 & 0.157 \\
$0.5\%$  & 20/20 & 4.22 & 3.21 & 0.407 & 0.255 \\
$1\%$    & 20/20 & 5.98 & 8.79 & 0.508 & 0.713 \\
\bottomrule
\end{tabular}
\end{table}

\subsection{Context-dependent support selection}
\label{sec:context_experiment}

To separate the rate parameterization from prior knowledge of the active
support, we consider a Brusselator on a $13\times13$ grid ($n=169$) whose
rates depend on context. The library contains four candidate stoichiometries
and ten falling-factorial propensity features for each stoichiometry, giving
40 directions. A nonnegative matrix $W\in\R^{40\times3}$ defines
$\theta_j(a,b)=W_{j,:}[1,a,b]^T$. Unlike selection by a separate simplex within
each stoichiometric channel, this parameterization imposes neither a
one-feature-per-channel constraint nor a prescribed support size. The true
rates depend linearly on $(a,b)$, so a linear rate map is sufficient.

Training uses exact snapshots generated by the model at four contexts. For
each context, the data include six initial states and observation intervals
$0.05$, $0.20$, and $0.50$. We first compute an $\ell_1$-regularized fit,
threshold its rates, and refit the selected support without a penalty. The
threshold $0.20$ is calibrated on seed 11 and fixed before running the
confirmation seeds 22, 33, 44, and 55. All five runs recover exactly the four
active directions with no false positives. This result includes the four
untouched confirmation runs. Across all five runs, the mean relative
coefficient error is $0.875\%$, the prediction error at four test contexts is
$1.11\%$, and the error at the longer horizon $t=1.5$ is $1.05\%$.

This experiment isolates context-dependent mechanism selection without
channel-wise constraints. The data are noiseless, the state space is small,
and the true rate map is linear. A nonlinear rate map requires separate tests
on unseen contexts under the same support-selection protocol.

\subsection{Under-resolved training and independent adjoint resolution}
\label{app:gradient_stress}
The choice of derivative matters when the forward approximation is coarse.
In an 80-state, ten-direction Fokker--Planck experiment, we hold fixed the
finite-Arnoldi forward map, exact targets, four initial states, parameter
initialization, and Adam updates. With $h=1$, $m_K=30$, and $n_q=8$, the
adjoint calculation gives an exact-exponential prediction error of $19.3\%$
and a coefficient error of $374\%$ after 2000 epochs. Differentiating the
finite Arnoldi calculation gives errors of $4.58\%$ and $55.2\%$. This
single-seed comparison shows that optimization of an under-resolved
finite-Arnoldi model can favor the derivative of the finite calculation. The
forward and adjoint approximations must both be resolved when the target is
the gradient of the exact exponential action.

At a stationary initial distribution, a nearly exact forward value can
coexist with an inaccurate coefficient gradient. In the recorded CTMC
experiment, the forward relative error is $1.86\times10^{-11}$ at depth three,
while the coefficient-gradient error is $0.180$. Increasing the adjoint depth
and checking its residual resolves the latter error. A stopping rule based
only on the forward calculation cannot detect it. The stationary-start,
invariant, and precision checks below give the corresponding numerical
details.

\subsection{Large sparse Lindblad contraction evaluation}
\label{app:lindblad_kernel}
The explicit sparse benchmark reaches eleven qubits on a 16~GB Apple M1 Pro.
Its generator contains local fields, nearest-neighbor exchange, dephasing,
and decay. The operator library contains the local dephasing and
nearest-neighbor exchange directions, for a total of $2q-1$ directions. This
benchmark measures gradient computation for a fixed library rather than
support recovery.

Table~\ref{tab:lindblad_kernel_scaling} reports median times after warmup. We
use three timed runs through ten qubits and two runs at eleven qubits. The
first four rows use Krylov depth 40 and eight quadrature nodes. The
eleven-qubit calculation uses depth 24 and six nodes. Increasing these values
to 32 and 10 changes its gradient by $2.56\times10^{-13}$. Workspace memory
includes the two Arnoldi bases and the fused sparse-direction cache, while
operator storage includes the generator and every library direction. The
eleven-qubit process reaches a maximum resident memory of 7.31~GB without
swapping.

\begin{table}[tbp]
\centering\small
\caption{Explicit sparse Lindblad scaling on a 16~GB Apple M1 Pro. Times are
in milliseconds. The final column gives the relative change in the gradient
after increasing the depth and quadrature order as described in the text.}
\label{tab:lindblad_kernel_scaling}
\resizebox{\linewidth}{!}{%
\begin{tabular}{rrrrrrrrr}
\toprule
Qubits & Dimension & Directions & $m_K/n_q$ & Operator MiB & Workspace MiB & Forward & Gradient & Relative change \\
\midrule
7  & 16,384    & 13 & 40/8 & 8.9   & 28.5   & 17.5  & 188.7 & $2.47\times10^{-15}$ \\
8  & 65,536    & 15 & 40/8 & 40.6  & 114.6  & 75.3  & 532.7 & $2.61\times10^{-15}$ \\
9  & 262,144   & 17 & 40/8 & 183.0 & 476.9  & 310.2 & 2,757.6 & $4.02\times10^{-15}$ \\
10 & 1,048,576 & 19 & 40/8 & 814.0 & 1,994.3 & 1,471.1 & 9,346.6 & $7.68\times10^{-15}$ \\
11 & 4,194,304 & 21 & 24/6 & 3,584.0 & 6,248.5 & 3,382.9 & 13,341.5 & $2.56\times10^{-13}$ \\
\bottomrule
\end{tabular}
}
\end{table}

Quadrature also permits the forward and adjoint Arnoldi calculations to use
the same workspace. We test this memory schedule on the local Lindblad
generator
\[
\mathcal L(\rho)=-\mathrm{i}\left[
  \sum_{s=1}^q\omega_s Z_s,\rho\right]
 +\sum_{s=1}^q\gamma_s(Z_s\rho Z_s-\rho),
\]
with positive, site-dependent coefficients. The generator is diagonal in the
local matrix-unit basis. Its action therefore requires $O(4^q)$ storage rather
than an explicit $4^q\times4^q$ matrix. We differentiate a local
transverse-field commutator and a local decay dissipator, neither of which
commutes with the generator. The initial density matrix and the derivative of
the loss are dense product operators.

We use $h=2$, Krylov depth 32, and eight Gauss--Legendre nodes. Exact divided
differences of the diagonal generator give an independent reference gradient.
At depth 24, the relative errors are $2.59\times10^{-11}$ for ten qubits and
$1.95\times10^{-10}$ for eleven qubits. Depth 32 reduces both errors below
$10^{-11}$ and is used throughout
Table~\ref{tab:lindblad_streaming_memory}.

The quadrature calculation stores the forward values at the eight nodes and
then overwrites the forward Arnoldi basis with the adjoint basis. The
retained-basis core constructs both bases once and contracts its singular
factors one at a time. The recomputed-basis core stores the selected adjoint
factors, overwrites the adjoint basis, and constructs the forward basis a
second time.

\begin{table}[tbp]
\centering\small
\caption{Memory schedules for the matrix-free Lindblad gradient on a 16~GB
Apple M1 Pro. The dimensions are $4^q$. We do not run the retained-basis core
at twelve qubits because its two depth-32 bases alone require 16~GiB. Errors
are measured against the exact divided-difference gradient. Each measurement
starts in a fresh process. Times cover the gradient calculation, and
\texttt{/usr/bin/time -l} reports the peak memory footprint.}
\label{tab:lindblad_streaming_memory}
\resizebox{\linewidth}{!}{%
\begin{tabular}{rrlrrrrr}
\toprule
Qubits & Dimension & Evaluation & Arnoldi passes & Rank & Time (s) & Peak (GiB) & Relative error \\
\midrule
10 & 1,048,576  & Quadrature       & 2 & -- & 5.30   & 1.26  & $1.52\times10^{-13}$ \\
   &            & Core             & 2 & 10 & 5.50   & 1.66  & $5.89\times10^{-14}$ \\
   &            & Core, recomputed & 3 & 10 & 8.19   & 1.26  & $5.90\times10^{-14}$ \\
11 & 4,194,304  & Quadrature       & 2 & -- & 11.54  & 3.44  & $2.20\times10^{-13}$ \\
   &            & Core             & 2 & 10 & 12.69  & 5.03  & $2.15\times10^{-13}$ \\
   &            & Core, recomputed & 3 & 10 & 17.82  & 3.40  & $2.15\times10^{-13}$ \\
12 & 16,777,216 & Quadrature       & 2 & -- & 94.06  & 12.30 & $4.49\times10^{-14}$ \\
   &            & Core             & 2 & -- & --     & $>16$ & -- \\
   &            & Core, recomputed & 3 & 10 & 124.20 & 12.53 & $1.65\times10^{-12}$ \\
\bottomrule
\end{tabular}
}
\end{table}

\subsection{CPU/GPU timing comparisons and dense differentiation}
In the portable sweep, the local transition rates remain fixed as the state
dimension increases. For each method, we choose the fastest tested
configuration that meets the error tolerance for both the coefficient
gradient and the input-state derivative. Table~\ref{tab:matfree}
compares GPU quadrature with differentiated Arnoldi. The table below adds CPU
measurements from the same JAX process. We do not interpolate between tested
configurations, and a dash indicates that none meets the tolerance.
Figure~\ref{fig:dense_bw} compares dense JAX and PyTorch differentiation with
the sparse Julia calculation. JAX uses compilation, PyTorch runs eagerly, and
the Julia implementation uses sparse operators.

\begin{table}[tbp]
\centering\small
\caption{Same-process Apple M1 CPU timings at tolerance $10^{-6}$, in
milliseconds.}
\begin{tabular}{llllll}
\toprule
CPU & $n$ & Matfree & Quadrature & Projection & Dense AD \\
\midrule
Apple M1 & 80 & 0.717 & 2.96 & 0.886 & 1.2 \\
Apple M1 & 320 & 1.87 & 3.83 & 1.76 & 15.1 \\
Apple M1 & 1000 & 6.37 & 7.8 & 5.56 & N/A \\
\bottomrule
\end{tabular}
\end{table}
\begin{figure}[tbp]
\centering\includegraphics[width=\linewidth]{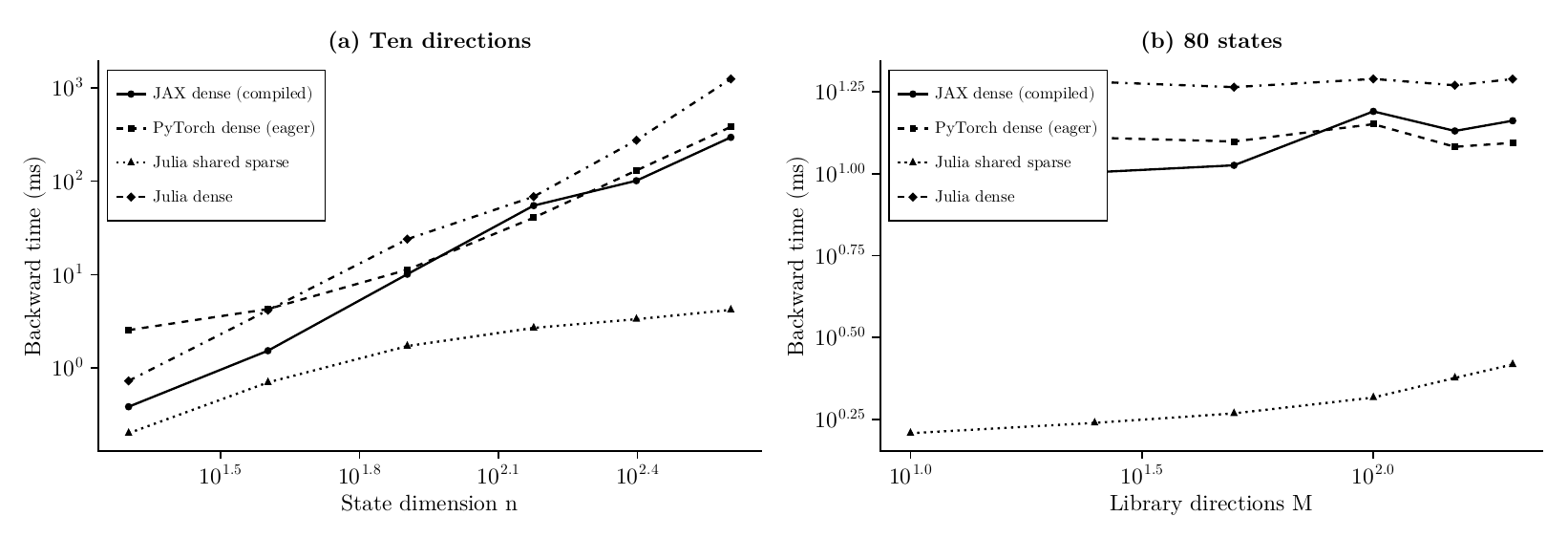}
\caption{Dense automatic differentiation and sparse Julia timing sweeps. The
left panel varies the state dimension with ten directions, and the right panel
varies the library size at $n=80$. Each curve retains its original language
and execution mode. This figure is separate from the same-backend comparison
with differentiated Arnoldi.}
\label{fig:dense_bw}
\end{figure}

\subsection{Two-sided projected factor, shared sparsity and precision}
\label{sec:core}
Quadrature can be replaced by a small block exponential that uses the same
two bases. Let $\bv=\beta_f V_f\bdelta_1$ and
$\nabla_{\bm y}\ell=\beta_a V_a\bdelta_1$. The block exponential
\begin{equation}
\label{eq:core}
C_m=\left[\exp\!\left(h\begin{pmatrix}
H_a&\beta_a\beta_f\bdelta_1\bdelta_1^*\\0&H_f^*
\end{pmatrix}\right)\right]_{12}.
\end{equation}
contains the two-sided projected Fr\'{e}chet factor. If
$C_m\approx U_r\Sigma_r W_r^*$, the contractions become
$\partial\ell/\partial\theta_j\approx\operatorname{Re}\sum_{k=1}^r
\sigma_k(V_a u_k)^*\bA_j(V_f w_k)$.
This is the two-sided approximation of
Kressner~\cite{kressner2019bivariate} and Algorithm~2 of Kandolf et
al.~\cite{kandolf2021lowrank}, specialized to the rank-one direction obtained
from the derivative of the loss. Its factors are shared across the affine
library. For orthonormal bases, Cauchy--Schwarz gives
\[
\left|\delta\left(\frac{\partial\ell}{\partial\theta_j}\right)\right|
\leq\|C_m-U_r\Sigma_r W_r^*\|_F\|\bA_j\|_F.
\]
This bound controls truncation of the projected factor, but not the Krylov
projection error. The block exponential evaluates the projected integral
exactly and therefore has no quadrature remainder.

When several directions share matrix entries, we store their values in a
sparse $|P|\times M_{\rm lib}$ matrix over the union pattern
$P=\bigcup_j\supp(\bA_j)$. The low-rank product is evaluated once at each
entry of this union, after which multiplication by the sparse value matrix
gives every directional contraction. Cache construction and storage are
counted separately. Compression and fused contraction need not reduce the
cost for a small library.

We measure factor compression and the union-pattern contraction separately.
Table~\ref{tab:union_pattern} fixes the state size, bases, and generator and
varies only the library size. The relative difference compares two
implementations of the same projected factor. It is not an error against the
exact exponential derivative.

\begin{table}[tbp]
\centering\small
\caption{Union-pattern contraction at $n=400$, depth 40, and $h=0.4$.
Timings exclude construction of the sparse cache.}
\label{tab:union_pattern}
\begin{tabular}{lllll}
\toprule
Directions & Direct (ms) & Fused (ms) & Speedup & Relative difference \\
\midrule
10 & 0.488 & 0.448 & 1.09 & $1.14\times10^{-15}$ \\
50 & 0.688 & 0.496 & 1.38 & $1.29\times10^{-15}$ \\
200 & 1.43 & 0.648 & 2.21 & $1.89\times10^{-15}$ \\
1000 & 5.57 & 1.52 & 3.67 & $2.25\times10^{-15}$ \\
\bottomrule
\end{tabular}
\end{table}
In a wide-stencil check with $n=400$, bandwidth parameter 25, and forward
depth 60, the coefficient-gradient errors are $9.71\times10^{-15}$ for
differentiated Arnoldi and $9.19\times10^{-12}$ for the rank-eight symmetric
factor. Reducing only the adjoint depth to 20 increases the latter error to
$5.27\times10^{-9}$, which shows the effect of resolving the two sides
independently. A second comparison uses a dense reference with $n=80$, 50
directions, and depth 40. The projected calculation takes 0.993 milliseconds,
and differentiated Arnoldi takes 0.764 milliseconds. Their coefficient errors
are $6.25\times10^{-11}$ and $3.46\times10^{-9}$, respectively.

\begin{table}[tbp]
\centering\small
\caption{Precision check against a Float64 dense reference. The finite-depth
calculations in this table are not fully converged.}
\begin{tabular}{llllll}
\toprule
System & $h$ & Precision & Forward error & Parameter error & State error \\
\midrule
Fokker--Planck & 0.3 & float64 & $2.05\times10^{-6}$ & $9.50\times10^{-5}$ & $1.14\times10^{-4}$ \\
Fokker--Planck & 0.3 & float32 & $2.52\times10^{-6}$ & $1.34\times10^{-4}$ & $2.42\times10^{-4}$ \\
Fokker--Planck & 1.0 & float64 & $1.12\times10^{-5}$ & $1.95\times10^{-4}$ & $2.79\times10^{-4}$ \\
Fokker--Planck & 1.0 & float32 & $1.72\times10^{-5}$ & $6.17\times10^{-4}$ & $1.51\times10^{-3}$ \\
Lindblad & 1.0 & float64 & $2.72\times10^{-16}$ & $4.56\times10^{-10}$ & $1.50\times10^{-15}$ \\
Lindblad & 1.0 & float32 & $2.74\times10^{-7}$ & $6.60\times10^{-7}$ & $1.06\times10^{-6}$ \\
\bottomrule
\end{tabular}
\end{table}
\subsection{Stationary starts, invariants and residual diagnostics}
At an exactly stationary CTMC input, the forward Arnoldi process may terminate
after one vector because $\bA\bp_0$ is a multiple of $\bp_0$. The tested
differentiated-Arnoldi implementation returns nonfinite gradients at this
breakdown. Replacing the breakdown with the terminated forward value instead
produces a zero coefficient derivative and misses the sensitivity of the
exact exponential action. Near stationarity, differentiation of the finite
Arnoldi operations is well defined, but its result can still differ
substantially from the derivative of the exact action. The separate adjoint
space resolves the path that remains nontrivial.

\begin{table}[tbp]
\centering\small
\caption{Near-stationary CTMC derivative check at perturbation $10^{-10}$.
Gradient errors use a dense evaluation of the exact exponential derivative as
the reference.}
\begin{tabular}{llll}
\toprule
Depth & Forward error & Adjoint Krylov gradient & Matfree gradient \\
\midrule
3 & $1.14\times10^{-11}$ & 0.125 & 0.831 \\
4 & $3.66\times10^{-12}$ & 0.0456 & 0.713 \\
8 & $9.65\times10^{-14}$ & $1.01\times10^{-3}$ & 0.0515 \\
12 & $1.11\times10^{-15}$ & $3.99\times10^{-6}$ & $3.63\times10^{-4}$ \\
\bottomrule
\end{tabular}
\end{table}
For conservative directions and an invariant adjoint vector $\bm c$, we have
$\bA^*\bm c=\bA_j^*\bm c=0$, so the exact parameter gradient vanishes
and the input-state gradient equals $\bm c$. The CTMC and Lindblad checks,
using an equivalent real representation in the second case, confirm both
identities for the adjoint Krylov calculation. Explicit projection onto the
invariant restores them for differentiated Arnoldi. Preservation of these
invariants does not certify the remaining gradient components.

For the CTMC diagnostic, we compute the residual bound from
Corollary~\ref{cor:ctmc_residual_bound} for each coefficient. We add the
absolute difference between the $n_q$- and $2n_q$-point quadrature results as
an empirical estimate of the quadrature error. In every saved test, the sum
exceeds the error measured against the dense reference.

We use this sum as the half-width of an interval around the magnitude of each
approximate gradient. The ranking is resolved only if the lower endpoint for
every selected direction exceeds the upper endpoint for every unselected
direction. Overlap makes the test inconclusive even when the computed ranking
is correct. The residual contribution is bounded, while the difference
between the two quadrature rules is an empirical estimate rather than a
rigorous remainder bound.
\subsection{Library size, correlation and misspecification}
In a noiseless 30-state CTMC experiment, the library contains all 870 directed
edges. Four edges are active, and three training distributions place mass at
their source states. An $\ell_1$-regularized fit is followed by a fixed
threshold and an unpenalized refit. All five seeds recover the four edges
without knowing the support size. The mean relative errors in the
coefficients, test transitions, and long-horizon predictions are
$2.57\times10^{-7}$, $9.58\times10^{-8}$, and $9.98\times10^{-8}$.

A large library alone is less problematic than a library with nearly
collinear candidates. In the correlation sweep below, accurate prediction can
coexist with an incorrect support. At correlation 0.9 and penalty
$5\times10^{-5}$, exact recovery changes from 0/5 after Adam to 5/5 after
L-BFGS refinement. Optimization and threshold separation therefore also
affect support recovery. In the misspecified libraries, the planted mechanism
is absent, so an exact-support fraction is not defined.

\begin{table}[tbp]
\centering\small
\caption{Five-seed library experiments. Setting denotes correlation or
misspecification magnitude. It is zero for the size sweep.}
\begin{tabular}{llllll}
\toprule
Sweep & Library size & Setting & Exact fraction & Test error & Long-time error \\
\midrule
Size & 10 & 0 & 1 & $1.12\times10^{-8}$ & $1.16\times10^{-8}$ \\
Size & 20 & 0 & 1 & $1.09\times10^{-8}$ & $1.08\times10^{-8}$ \\
Size & 40 & 0 & 1 & $7.99\times10^{-9}$ & $8.32\times10^{-9}$ \\
Size & 80 & 0 & 1 & $7.68\times10^{-9}$ & $7.93\times10^{-9}$ \\
Correlation & 40 & 0.9 & 0 & 0.019 & 0.0551 \\
Correlation & 40 & 0.99 & 0 & $8.26\times10^{-3}$ & 0.0257 \\
Correlation & 40 & 0.999 & 0 & $3.27\times10^{-3}$ & 0.0111 \\
Misspecified & 40 & 0.01 & N/A & $1.72\times10^{-4}$ & $8.14\times10^{-4}$ \\
Misspecified & 40 & 0.05 & N/A & $8.54\times10^{-4}$ & $4.03\times10^{-3}$ \\
Misspecified & 40 & 0.1 & N/A & $1.70\times10^{-3}$ & $7.96\times10^{-3}$ \\
Misspecified & 40 & 0.2 & N/A & $3.39\times10^{-3}$ & 0.0156 \\
\bottomrule
\end{tabular}
\end{table}
\subsection{Neural application comparisons}
These discovery comparisons use the same library but different integrators,
observation intervals, and optimization budgets. The exponential fit at the
shorter interval also misses one true term. A finer time discretization alone
does not guarantee support recovery at a fixed training budget.

\begin{table}[tbp]
\centering\small
\caption{Stuart--Landau integrator comparisons during the discovery stage.
The true model contains eight terms.}
\begin{tabular}{lllll}
\toprule
Method & True hits & False hits & Epochs & Seconds \\
\midrule
Exponential h=1.0 & 8 & 0 & 800 & 91.6 \\
Euler h=1.0 & 3 & 9 & 800 & 37.9 \\
RK4$\times$1 h=1.0 & 2 & 8 & 800 & 97.6 \\
RK4$\times$32 h=1.0 & 1 & 1 & 100 & 294 \\
Exponential h=0.1 & 7 & 1 & 800 & 74.2 \\
\bottomrule
\end{tabular}
\end{table}

\begin{table}[tbp]
\centering\small
\caption{Lindblad observation studies. Entries are mean componentwise rate
errors on test contexts, in percent, over three seeds.}
\begin{tabular}{lllll}
\toprule
Qubits & Full state & Local Pauli & Single-site Pauli & Local $Z$ \\
\midrule
2 & 5.94 & 5.44 & 6.6 & 338 \\
3 & 6.97 & 7.93 & 15.5 & 293 \\
4 & 5.8 & 6.12 & 6.96 & 295 \\
\bottomrule
\end{tabular}
\end{table}

\begin{table}[tbp]
\centering\small
\caption{Lindblad experiments with finite-shot local Pauli observations. Each
entry is the mean rate error in percent over three seeds.}
\begin{tabular}{lllll}
\toprule
Qubits & Noiseless & 100 shots & 1,000 shots & 10,000 shots \\
\midrule
2 & 5.44 & 11.5 & 6.56 & 5.56 \\
3 & 7.93 & 16.3 & 8.69 & 7.92 \\
4 & 6.12 & 15.9 & 7.45 & 6.2 \\
\bottomrule
\end{tabular}
\end{table}
\subsection{Additional path and diffusion experiments}
The exponential derivative also applies to path likelihoods. For an observed
transition $i\to j$ with positive probability, the loss is
$-\log([e^{hA}]_{ji})$. Its derivative at the right endpoint differs from the
derivative of a squared snapshot error. The archived Ornstein--Uhlenbeck,
double-well, and two-dimensional diffusion studies use this likelihood. The
table reports their test coefficient errors separately from the matched
gradient comparisons. A missing Euler--Maruyama entry denotes a nonfinite
result.

\begin{table}[tbp]
\centering\small
\caption{Archived test coefficient errors, in percent. For two-dimensional
entries, we average the relative norm of the two-coefficient vector over test
contexts. This avoids division by a vanishing individual coefficient.}
\begin{tabular}{lll}
\toprule
Experiment / loss & Contexts & Mean error (\%) \\
\midrule
Trajectory OU (likelihood) & 5 & 1.94 \\
Trajectory OU (snapshot MSE) & 5 & 1.2 \\
Double well (exponential) & 6 & 6.11 \\
Double well (Euler--Maruyama) & 6 & N/A \\
2D trajectory (likelihood) & 5 & 79.4 \\
2D trajectory (snapshot MSE) & 5 & 58.6 \\
\bottomrule
\end{tabular}
\end{table}

\FloatBarrier
\fi
\ifmainpaper\clearpage\fi
\bibliographystyle{siamplain}
\bibliography{references_cme}
\end{document}